\documentclass[10pt, conference, compsocconf]{IEEEtran}
\ifCLASSINFOpdf
\usepackage[pdftex]{graphicx}
  % declare the path(s) where your graphic files are
  % \graphicspath{{../pdf/}{../jpeg/}}
  % and their extensions so you won't have to specify these with
  % every instance of \includegraphics
  % \DeclareGraphicsExtensions{.pdf,.jpeg,.png}
\else
  % or other class option (dvipsone, dvipdf, if not using dvips). graphicx
  % will default to the driver specified in the system graphics.cfg if no
  % driver is specified.
  % \usepackage[dvips]{graphicx}
  % declare the path(s) where your graphic files are
  % \graphicspath{{../eps/}}
  % and their extensions so you won't have to specify these with
  % every instance of \includegraphics
  % \DeclareGraphicsExtensions{.eps}
\fi
% graphicx was written by David Carlisle and Sebastian Rahtz. It is
% required if you want graphics, photos, etc. graphicx.sty is already
% installed on most LaTeX systems. The latest version and documentation can
% be obtained at: 
% http://www.ctan.org/tex-archive/macros/latex/required/graphics/
% Another good source of documentation is "Using Imported Graphics in
% LaTeX2e" by Keith Reckdahl which can be found as epslatex.ps or
% epslatex.pdf at: http://www.ctan.org/tex-archive/info/
%
% latex, and pdflatex in dvi mode, support graphics in encapsulated
% postscript (.eps) format. pdflatex in pdf mode supports graphics
% in .pdf, .jpeg, .png and .mps (metapost) formats. Users should ensure
% that all non-photo figures use a vector format (.eps, .pdf, .mps) and
% not a bitmapped formats (.jpeg, .png). IEEE frowns on bitmapped formats
% which can result in "jaggedy"/blurry rendering of lines and letters as
% well as large increases in file sizes.
%
% You can find documentation about the pdfTeX application at:
% http://www.tug.org/applications/pdftex

% *** MATH PACKAGES ***
%
\usepackage[cmex10]{amsmath}
% A popular package from the American Mathematical Society that provides
% many useful and powerful commands for dealing with mathematics. If using
% it, be sure to load this package with the cmex10 option to ensure that
% only type 1 fonts will utilized at all point sizes. Without this option,
% it is possible that some math symbols, particularly those within
% footnotes, will be rendered in bitmap form which will result in a
% document that can not be IEEE Xplore compliant!
%
% Also, note that the amsmath package sets \interdisplaylinepenalty to 10000
% thus preventing page breaks from occurring within multiline equations. Use:
%\interdisplaylinepenalty=2500
% after loading amsmath to restore such page breaks as IEEEtran.cls normally
% does. amsmath.sty is already installed on most LaTeX systems. The latest
% version and documentation can be obtained at:
% http://www.ctan.org/tex-archive/macros/latex/required/amslatex/math/

% *** SPECIALIZED LIST PACKAGES ***
%
\usepackage{algorithmic}
\usepackage{url}
% url.sty was written by Donald Arseneau. It provides better support for
% handling and breaking URLs. url.sty is already installed on most LaTeX
% systems. The latest version can be obtained at:
% http://www.ctan.org/tex-archive/macros/latex/contrib/misc/
% Read the url.sty source comments for usage information. Basically,
% \url{my_url_here}.

% *** Do not adjust lengths that control margins, column widths, etc. ***
% *** Do not use packages that alter fonts (such as pslatex).         ***
% There should be no need to do such things with IEEEtran.cls V1.6 and later.
% (Unless specifically asked to do so by the journal or conference you plan
% to submit to, of course. )

% correct bad hyphenation here
\hyphenation{op-tical net-works semi-conduc-tor}

\newtheorem{theorem}{Theorem}

\usepackage{adjustbox}
\usepackage{algorithm}
\usepackage{color}
\usepackage{multirow}
\usepackage{setspace}
\linespread{0.839}
\usepackage{lipsum}

% ADD THE FOLLOWING COUPLE LINES INTO YOUR PREAMBLE
\let\OLDthebibliography\thebibliography
\renewcommand\thebibliography[1]{
  \OLDthebibliography{#1}
  \setlength{\parskip}{4pt}
  \setlength{\itemsep}{0pt plus 0.3ex}
}

%\setstretch{0.975}

\begin{document}
%
% paper title
% can use linebreaks \\ within to get better formatting as desired
\title{Significantly Improving Lossy Compression for Scientific Data Sets Based on Multidimensional Prediction and Error-Controlled Quantization}
%\title{A Novel Compression Model for Significantly Improving Error-bounded HPC Data Lossy Compression}

% author names and affiliations
% use a multiple column layout for up to two different
% affiliations

% conference papers do not typically use \thanks and this command
% is locked out in conference mode. If really needed, such as for
% the acknowledgment of grants, issue a \IEEEoverridecommandlockouts
% after \documentclass

% for over three affiliations, or if they all won't fit within the width
% of the page, use this alternative format:
% 
\author{\IEEEauthorblockN{Dingwen Tao,\IEEEauthorrefmark{1}
Sheng Di,\IEEEauthorrefmark{2}
Zizhong Chen,\IEEEauthorrefmark{1} and
Franck Cappello\IEEEauthorrefmark{2}\IEEEauthorrefmark{3}}
\IEEEauthorblockA{\IEEEauthorrefmark{1}%Department of Computer Science and Engineer\\
University of California,
Riverside, CA, USA\\ \{dtao001, chen\}@cs.ucr.edu}
\IEEEauthorblockA{\IEEEauthorrefmark{2}Argonne National Laboratory, IL, USA\\
\{sdi1, cappello\}@anl.gov}
\IEEEauthorblockA{\IEEEauthorrefmark{3}University of Illinois at Urbana-Champaign, IL, USA}}

% use for special paper notices
%\IEEEspecialpapernotice{(Invited Paper)}

% make the title area
\maketitle

\begin{abstract}
\linespread{0.94}\selectfont
Today's HPC applications are producing extremely large amounts of data, such that data storage and analysis are becoming more challenging for scientific research. In this work, we design a new error-controlled lossy compression algorithm for large-scale scientific data. Our key contribution is significantly improving the prediction hitting rate (or prediction accuracy) for each data point based on its nearby data values along multiple dimensions. We derive a series of multilayer prediction formulas and their unified formula in the context of data compression. One serious challenge is that the data prediction has to be performed based on the preceding decompressed values during the compression in order to guarantee the error bounds, which may degrade the prediction accuracy in turn. We explore the best layer for the prediction by considering the impact of compression errors on the prediction accuracy. Moreover, we propose an adaptive error-controlled quantization encoder, which can further improve the prediction hitting rate considerably. The data size can be reduced significantly after performing the variable-length encoding because of the uneven distribution produced by our quantization encoder. We evaluate the new compressor on production scientific data sets and compare it with many other state-of-the-art compressors: GZIP, FPZIP, ZFP, SZ-1.1, and ISABELA. Experiments show that our compressor is the best in class, especially with regard to compression factors (or bit-rates) and compression errors (including RMSE, NRMSE, and PSNR). Our solution is better than the second-best solution by more than a 2x increase in the compression factor and 3.8x reduction in the normalized root mean squared error on average, with reasonable error bounds and user-desired bit-rates.
%In fixed error bound term, the compression factor of our compressor is nearly 130\% higher than the second-best compressor ZFP on all tested scientific data sets on average with reasonable error bounds. Also, in fixed bit-rate term, our compressor has an increase in accuracy (or reduction in NRMSE) of 5.4x over ZFP on average with user-desired bit-rates.
\end{abstract}

% \begin{IEEEkeywords}
% Lossy Compression, Scientific Data, Compression Factor, Bit-Rates, Rate-Distortion, Performance

% \end{IEEEkeywords}

% For peer review papers, you can put extra information on the cover
% page as needed:
% \ifCLASSOPTIONpeerreview
% \begin{center} \bfseries EDICS Category: 3-BBND \end{center}
% \fi
%
% For peerreview papers, this IEEEtran command inserts a page break and
% creates the second title. It will be ignored for other modes.
\IEEEpeerreviewmaketitle

\section{Introduction}
One of the most challenging issues in performing scientific simulations or running large-scale parallel applications today is the vast amount of data to store in disks, to transmit on networks, or to process in postanalysis. The Hardware/Hybrid Accelerated Cosmology Code (HACC), for example, can generate 20 PB of data for a single 1-trillion-particle simulation; yet a system such as the Mira supercomputer at the Argonne Leadership Computing Facility has only 26 PB of file system storage, and a single user cannot request 75\% of the total storage capacity for a simulation. Climate research also deals with a large volume of data during simulation and postanalysis. As indicated by \cite{glecler}, nearly 2.5 PB of data were produced by the Community Earth System Model for the Coupled Model Intercomparison Project (CMIP) 5, which further introduced 170 TB of postprocessing data submitted to the Earth System Grid \cite{esg}. Estimates of the raw data requirements for the CMIP6 project exceed 10 PB \cite{baker}. 

Data compression offers an attractive solution for large-scale simulations and experiments because it enables significant reduction of data size while keeping critical information available to preserve discovery opportunities and analysis accuracy. Lossless compression preserves 100\% of the information; however, it suffers from limited compression factor (up to 2:1 in general \cite{ratana}), which is far less than the demand of large-scale scientific experiments and simulations. Therefore, only lossy compression with user-set error controls can fulfill user needs in terms of data accuracy and of large-scale execution demand.

The key challenge in designing an efficient error-controlled lossy compressor for scientific research applications is the large diversity of scientific data. Many of the existing lossy compressors (such as SZ-1.1 \cite{di} and ISABELA \cite{laksh}) try to predict the data by using curve-fitting method or spline interpolation method. The effectiveness of these compressors highly relies on the smoothness of the data in local regions. However, simulation data often exhibits fairly sharp or spiky data changes in small data regions, which may significantly lower the prediction accuracy of the compressor and eventually degrade the compression quality. NUMARCK \cite{chen} and SSEM \cite{sasaki} both adopt a quantization step in terms of the distribution of the data (or quantile), which can mitigate the dependence of smoothness of data; however, they are unable to strictly control the compression errors based on the user-set bounds. ZFP \cite{lindstrom} uses an optimized orthogonal data transform that does not strongly rely on the data smoothness either; however, it requires an exponent/fixed-point alignment step, which might not respect the user error bound when the data value range is huge (as shown later in the paper). And its optimized transform coefficients are highly dependent on the compression data and cannot be modified by users.

In this work, we propose a novel lossy compression algorithm %, namely adaptive error-controlled quantization plus variable-length encoding (AEQVE), 
that can deal with the irregular data with spiky changes effectively, will still strictly respecting user-set error bounds. Specifically, the critical contributions are threefold: 
\begin{itemize}
\item We propose a multidimensional prediction model that can significantly improve the prediction hitting rate (or prediction accuracy) for each data point based on its nearby data values in multiple dimensions, unlike previous work \cite{di} that focuses only on single-dimension prediction. Extending the single-dimension prediction to multiple dimensions is challenging. Higher-dimensional prediction requires solving more complicated surface equation system involving many more variables, which become intractable especially when the number of data points used in the prediction is relatively high. However, since the data used in the prediction must be preceding decompressed values in order to strictly control the compression errors, the prediction accuracy is degraded significantly if many data points are selected for the prediction. In this paper, not only do we derive a generic formula for the multidimensional prediction model but we also optimize the number of data points used in the prediction by an in-depth analysis with real-world data cases.
\item We design an adaptive error-controlled quantization and variable-length encoding model in order to optimize the compression quality. Such an optimization is challenging in that we need to design the adaptive solution based on very careful observation on masses of experiments and the variable-length encoding has to be tailored and reimplemented to suit variable numbers of quantization intervals. 
\item We implement the new compression algorithm, namely SZ-1.4, and release the source code under a BSD license. We comprehensively evaluate the new compression method by using multiple real-world production scientific data sets across multiple domains, such as climate simulation \cite{cesm}, X-ray scientific research \cite{aps}, and hurricane simulation \cite{hurricane}. We compare our compressor with five state-of-the-art compressors: GZIP, FPZIP, ZFP, SZ-1.1, and ISABELA. Experiments show that our compressor is the best in class, especially with regard to both compression factors (or bit-rates) and compression errors (including RMSE, NRMSE, and PSNR). On the three tested data sets, our solution is better than the second-best solution by nearly a 2x increase in the compression factor and 3.8x reduction in the normalized root mean squared error on average.
\end{itemize}

The rest of the paper is organized as follows. In Section \ref{sec: problem} we formulate the error-controlled lossy compression issue. We describe our novel compression method in Section \ref{sec: prediction} (an optimized multidimensional prediction model with best-layer analysis) and Section \ref{sec: AEQVE} (an adaptive error-controlled quantization and variable-length encoding model). In Section \ref{sec: evaluation} we evaluate the compression quality using multiple production scientific data sets. In Section \ref{sec: discuss} we discuss the use of our compressor in parallel for large-scale data sets and perform an evaluation on a supercomputer. In Section \ref{sec: relate} we discuss the related work, and in Section \ref{sec:conclude} we conclude the paper with a summary and present our future work. 
\section{Problem and Metrics Description}
\label{sec: problem}

In this paper, we focus mainly on the design and implementation of a lossy compression algorithm for scientific data sets with given error bounds in high-performance computing (HPC) applications. These applications can generate multiple snapshots that will contain many variables. Each variable has a specific data type, for example, multidimensional floating-point array and string data. Since the major type of the scientific data is floating-point, we focus our lossy compression research on how to compress multidimensional floating-point data sets within reasonable error bounds. Also, we want to achieve a better compression performance measured by the following metrics:

\begin{enumerate}
\item Pointwise compression error between original and reconstructed data sets, for example, absolute error and value-range-based relative error \footnote{Note that unlike the pointwise relative error that is compared with each data value, value-range-based relative error is compared with value range.}
\item Average compression error between original and reconstructed data sets, for example, RMSE, NRMSE, and PSNR.
\item Correlation between original and reconstructed data sets
\item Compression factor or bit-rates
\item Compression and decompression speed
%\item Autocorrelation of compression errors.
\end{enumerate}

We describe these metrics in detail below. Let us first define some necessary notations.

Let the original multidimensional floating-point data set be $X = \{x_1,x_2,..., x_N\}$, where each $x_i$ is a floating-point scalar. Let the reconstructed data set be $\tilde{X} = \{\tilde{x_1},\tilde{x_2},..., \tilde{x_N}\}$, which is recovered by the decompression process. Also, we denote the range of $X$ by $R_X$, that is, $R_X = x_{max} - x_{min}$. 

We now discuss the metrics we may use in measuring the performance of a compression method.

Metric 1: For data point $i$, let $e_{abs_i} = x_i - \tilde{x_i}$, where $e_{abs_i}$ is the \textit{absolute error}; let $e_{rel_i} = e_{abs_i} / R_X$, where $e_{rel_i}$ is the \textit {value-range-based relative error}. %For convenience, we use \textit{absolute error} to refer to absolute pointwise error and use \textit{relative error} to refer to value-range-based relative error. 
In our compression algorithm, one should set either one bound or both bounds for the absolute error and the value-range-based relative error depending on their compression accuracy requirement. The compression errors will be guaranteed within the error bounds, which can be expressed by the formula $|e_{abs_i}| < eb_{abs}$ or/and $|e_{rel_i}| < eb_{rel}$ for $1\leq i \leq N$, where $eb_{abs}$ is the absolute error bound and $eb_{rel}$ is the value-range-based relative error bound.

Metric 2: To evaluate the average error in the compression, we first use the popular root mean squared error (RMSE).
\begin{equation}
rmse  = \sqrt{\frac{1}{N}\sum_{i=1}^{N} (e_{abs_i})^2}
\end{equation}

Because of the diversity of variables, we further adopt the normalized RMSE (NRMSE).
\begin{equation}
nrmse = \frac{rmse}{R_X}
\end{equation}

The peak signal-to-noise ratio (PSNR) is another commonly used average error metric for evaluating a lossy compression method, especially in visualization. It is calculated as following.
\begin{equation} \label{eq: psnr}
psnr = 20\cdot log_{10}(\frac{R_X}{rmse})
\end{equation}
PSNR measures the size of the RMSE relative to the peak size of the signal. Logically, a lower value of RMSE/NRMSE means less error, but a higher value of PSNR represents less error.

Metric 3: To evaluate the correlation between original and reconstructed data sets, we adopt the Pearson correlation coefficient $\rho$,	
\begin{equation}
\rho = \frac{cov(X,\tilde{X})}{\sigma_{X}\sigma_{\tilde{X}}},	
\end{equation}
where $cov(X,\tilde{X})$ is the covariance. This coefficient is a measurement of the linear dependence between two variables, giving $\rho$ between $+1$ and $-1$, where $\rho = 1$ is the total positive linear correlation. The APAX profiler \cite{wegener} suggests that the correlation coefficient between original and reconstructed data should be 0.99999 (``five nines") or better. 

Metric 4: To evaluate the size reduce as a result of the compression, we use the compression factor $CF$,
\begin{equation}
CF(F) = \frac{filesize(F_{orig})}{filesize(F_{comp})},
\end{equation}
or the bit-rate (bits/value),
\begin{equation}
BR(F) = \frac{filesize_{bit}(F_{comp})}{N},
\end{equation}
where $filesize_{bit}$ is the file size in bits and $N$ is the data size. The bit-rate represents the amortized storage cost of each value. For a single/double floating-point data set, the bit-rate is 32/64 bits per value before a compression, while the bit-rate will be less than 32/64 bits per value after a compression. Also, $CF$ and $BR$ have a mathematical relationship as $BR(F) * CF(F) = 32/64$ so that a lower bit-rate means a higher compression factor.
 
Metric 5: To evaluate the speed of compression, we compare the throughput  (bytes per second) based on the execution time of both compression and decompression with other compressors.
\section{Prediction Model Based on Mutidimensional Scientific Data Sets}
\label{sec: prediction}

In Sections \ref{sec: prediction} and \ref{sec: AEQVE}, we present our novel compression algorithm. At a high level, the compression process involves three steps: (1) predict every data value through our proposed multilayer prediction model; (2) adopt an error-controlled quantization encoder with an adaptive number of intervals; and (3) perform a variable-length encoding technique based on the uneven distributed quantization codes. In this section, we first present our new multilayer prediction model designed for multidimensional scientific data sets. Then, we give a solution for choosing the best layer for our multilayer prediction model. We illustrate how our prediction model works using two-dimensional data sets as an example. 

\subsection{Prediction Model for Multidimensional Scientific Data Sets}
Consider a two-dimensional data set on a uniform grid of size $M \times N$, where $M$ is the size of second dimension and $N$ is the size of first dimension. We give each data point a global coordinate $(i,j)$, where $0 < i \leq M$ and $0 < j \leq N$.

In our compression algorithm, we process the data point by point from the low dimension to the high dimension. Assume that the coordinates of the current processing data point are $(i_0,j_0)$ and the processed data points are $(i,j)$, where $i < i_0$ or $i = i_0$, $j < j_0$, as shown in Figure \ref{fig: f1}. The figure also shows our definition of ``layer" around the processing data point $(i_0,j_0)$.  We denote the data subset $S_{i_0 j_0}^n$ and $T_{i_0 j_0}^n$ by	
\begin{align*}
S_{i_0 j_0}^n & = \{(i_0-k_1,j_0-k_2) | 0 \leq k_1,k_2 \leq n\} \setminus \{(i_0,j_0)\} \\
T_{i_0 j_0}^n & = \{(i_0-k_1,j_0-k_2) | 0 \leq k_1+k_2 \leq 2n-1, k_1,k_2 \geq 0\}.
\end{align*}
Since the data subset $S_{i_0 j_0}^n$ contains the layer from the first one to the \textit{n}th one, we call $S_{i_0 j_0}^n$ ``\textit{n-layer data subset}."

\begin{figure}[t]
\centering
\includegraphics[scale=0.33]{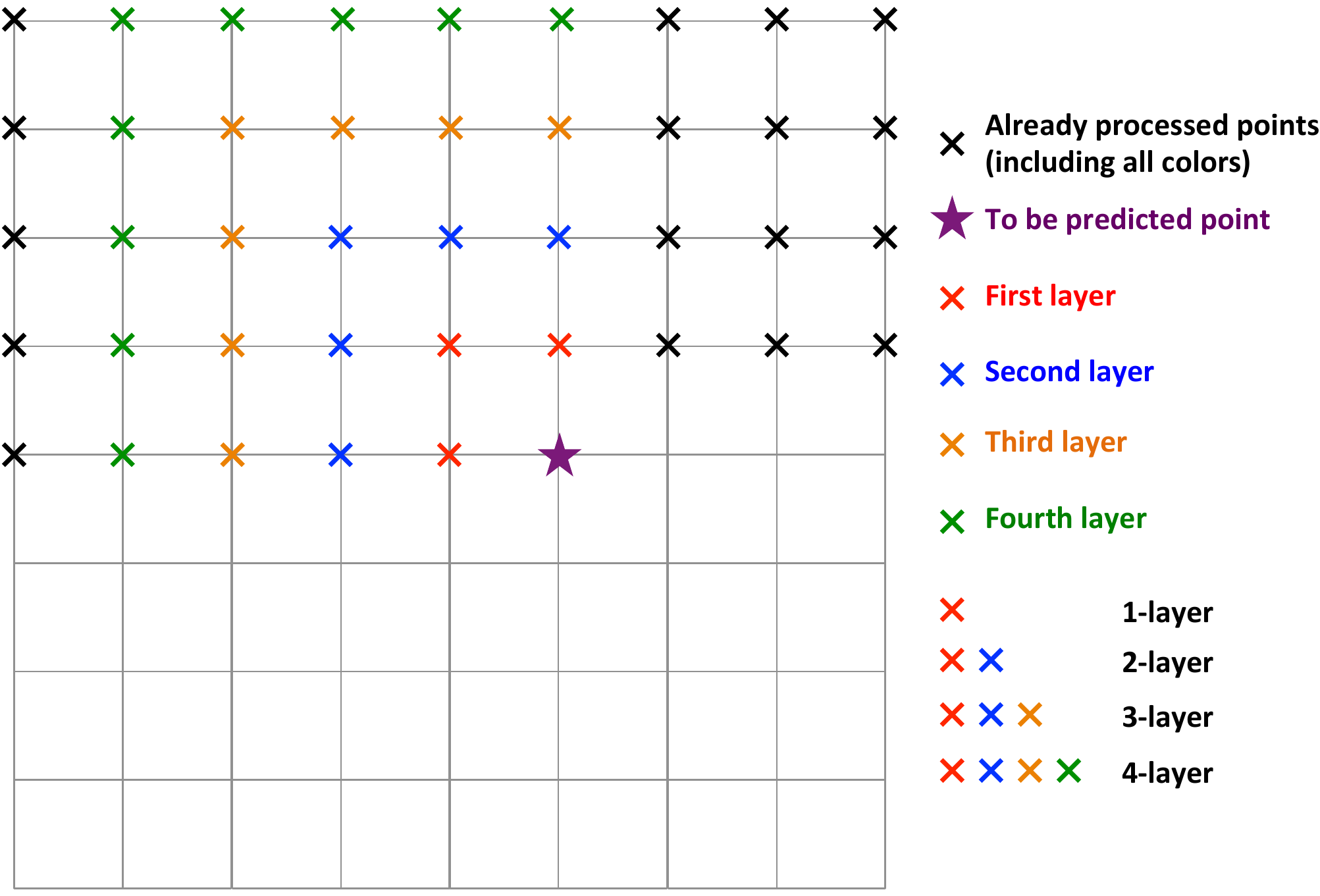}
\caption{Example of $9\times9$ two-dimensional data set showing the processed / processing data and the data in different layers of the prediction model.}
\label{fig: f1}
\vspace{-4mm}
\end{figure}

Now we build a prediction model for two-dimensional data sets using the $n(n+2)$ symmetric processed data points in the \textit{n}-layer data subset $S_{i_0 j_0}^n$ to predict data $(i_0,j_0)$.

First, let us define a three-dimensional surface, called the ``\textit{prediction surface}," with the maximum order of $2n-1$ as follows. 
\vspace{-1mm}
\begin{small}
\begin{equation} \label{eq: surface}
f(x,y) = \sum\limits_{0 \leq i+j \leq 2n-1}^{i,j \geq 0} a_{i,j}x^i y^j
\end{equation}
\end{small}

The surface $f(x,y)$ has $n(2n+1)$ coefficients, so we can construct a linear system with $n(2n+1)$ equations by using the coordinates and values of $n(2n+1)$ data points. And then solve this system for these $n(2n+1)$ coefficients; consequently, we build the prediction surface $f(x,y)$. However, the problem is that not every linear system has a solution, which also means not every set of $n(2n+1)$ data is able to be on the surface at the same time. Fortunately, we demonstrate that the linear system constructed by the $n(2n+1)$ data in $T_{i_0 j_0}^n$ can be solved with an explicit solution. Also, we demonstrate that $f(i_0,j_0)$ can be expressed by the linear combination of the data values in $S_{i_0 j_0}^n$.

Now let us give the following theorem and proof.

\begin{theorem}
The $n(2n+1)$ data in $T_{i_0 j_0}^n$ will determine a surface $f(x,y)$ shown in equation (\ref{eq: surface}), and the value of $f(i_0,j_0)$ equals $\sum\limits_{0 \leq k_1,k_2 \leq n}^{(k_1,k_2) \neq (0,0)} (-1)^{k_1+k_2+1}\binom{n}{k_1} \binom{n}{k_2} V(i_0-k_1,j_0-k_2)$, where $\binom{n}{k}$ is the binomial coefficient and $V(i,j)$ is the data value of $(i,j)$ in $S_{i_0 j_0}^n$.
\end{theorem}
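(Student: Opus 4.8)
I would split the statement into two essentially independent parts: (i) \emph{well-posedness} — the values on $T_{i_0 j_0}^n$ determine a unique surface $f$ of the prescribed degree; and (ii) \emph{the evaluation identity} — for any such $f$, the value $f(i_0,j_0)$ is the stated binomial combination of the values at the nodes of $S_{i_0 j_0}^n$. It is convenient to begin with the affine substitution $(x,y)\mapsto(i_0-x,\,j_0-y)$, which preserves the space $\Pi_{2n-1}$ of bivariate polynomials of total degree $\le 2n-1$, sends $(i_0,j_0)$ to the origin, sends $T_{i_0 j_0}^n$ onto the standard triangular lattice $\Lambda=\{(a,b): a,b\ge 0,\ a+b\le 2n-1\}$, and sends $S_{i_0 j_0}^n$ onto the punctured block $\{(k_1,k_2):0\le k_1,k_2\le n\}\setminus\{(0,0)\}$.

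For (i) I would use the ``binomial'' basis $\{\binom{x}{a}\binom{y}{b}: a+b\le 2n-1\}$ of $\Pi_{2n-1}$, which has exactly $\dim\Pi_{2n-1}=n(2n+1)=|T_{i_0 j_0}^n|$ elements. The value of $\binom{x}{a}\binom{y}{b}$ at a node $(\alpha,\beta)\in\Lambda$ is $\binom{\alpha}{a}\binom{\beta}{b}$, which vanishes unless $a\le\alpha$ and $b\le\beta$ and equals $1$ when $(a,b)=(\alpha,\beta)$; ordering both $\Lambda$ and the basis compatibly with the componentwise partial order makes the interpolation matrix unit lower-triangular, hence invertible. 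This proves unisolvence and at the same time yields the promised explicit solution in Newton forward-difference form, $f(x,y)=\sum_{a+b\le 2n-1}\binom{x}{a}\binom{y}{b}\,\Delta^{(a,b)}$, where $\Delta^{(a,b)}$ is the mixed finite difference of the nodal data at the base point (a backward-difference formula once one unwinds the coordinate flip). A self-contained alternative for unisolvence is the classical line-peeling induction on the degree: a member of $\Pi_{2n-1}$ vanishing on $\Lambda$ has $2n$ zeros on the line $b=0$, hence is divisible by $y$, and the quotient lies in $\Pi_{2n-2}$ and vanishes on the smaller triangular lattice, so induction closes it.

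For (ii) consider the linear functional $L(g)=\sum_{0\le k_1,k_2\le n}(-1)^{k_1+k_2}\binom{n}{k_1}\binom{n}{k_2}\,g(i_0-k_1,j_0-k_2)$, i.e.\ the iterated backward difference $\nabla_x^{\,n}\nabla_y^{\,n}g$ evaluated at $(i_0,j_0)$. I claim $L$ vanishes on $\Pi_{2n-1}$. By linearity it suffices to test a monomial $x^{p}y^{q}$, on which $L$ factors as $\big(\sum_{k_1=0}^{n}(-1)^{k_1}\binom{n}{k_1}(i_0-k_1)^{p}\big)\big(\sum_{k_2=0}^{n}(-1)^{k_2}\binom{n}{k_2}(j_0-k_2)^{q}\big)$; the first factor is, up to sign, the $n$-th finite difference of the degree-$p$ polynomial $t\mapsto t^{p}$ and so vanishes whenever $p<n$, and symmetrically the second vanishes whenever $q<n$. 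Hence $L(x^py^q)=0$ unless $p\ge n$ and $q\ge n$, i.e.\ unless $p+q\ge 2n$, which never occurs inside $\Pi_{2n-1}$. Applying $L$ to the surface $f$ from part (i) therefore gives $L(f)=0$; isolating the $(k_1,k_2)=(0,0)$ term, whose coefficient is $\binom{n}{0}\binom{n}{0}=1$, and moving the remaining $n(n+2)$ terms to the other side yields $f(i_0,j_0)=\sum_{0\le k_1,k_2\le n,\ (k_1,k_2)\neq(0,0)}(-1)^{k_1+k_2+1}\binom{n}{k_1}\binom{n}{k_2}f(i_0-k_1,j_0-k_2)$, with every node on the right lying in $S_{i_0 j_0}^n$ (all but the block corner $(i_0-n,j_0-n)$ also lying in $T_{i_0 j_0}^n$, where $f$ reproduces the data exactly); replacing these surface values by the corresponding data values $V$ gives exactly the claimed prediction formula.

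As for difficulty, part (ii) is essentially bookkeeping once the product factorization of $L$ is spotted, so the only substantive ingredient is the unisolvence in part (i), which is nonetheless a well-known fact provable either by the triangular binomial-basis argument above or by line-peeling induction. The points that demand care are the dimension count $\dim\Pi_{2n-1}=n(2n+1)=|T_{i_0 j_0}^n|$, keeping straight the orientation of the finite differences after the coordinate reflection, and verifying that every node used on the right of the formula has a strictly smaller first coordinate than $i_0$, or the same first coordinate and a strictly smaller second coordinate — so that it has already been processed, which is what makes the identity usable as a predictor under the error-controlled scheme.
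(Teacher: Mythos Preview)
Your argument for the evaluation identity (your part (ii)) is essentially the paper's own: after the coordinate shift, both test the binomial-weighted sum on each monomial $x^{l}y^{m}$, factor it as a product of one-variable $n$-th finite differences, and use the fact that an $n$-th difference annihilates polynomials of degree $<n$ to see that only the constant term $a_{0,0}=f(0,0)$ survives; your packaging as a linear functional $L$ vanishing on $\Pi_{2n-1}$ is the same computation in cleaner notation, where the paper instead tracks the coefficient of each $a_{l,m}$ directly. You go beyond the paper in part (i): the paper simply asserts that the interpolation system on $T_{i_0 j_0}^{n}$ is solvable, whereas you actually supply a unisolvence proof via the triangular binomial-basis argument (or line-peeling induction), which is a real addition. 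One lacuna you correctly flag in your parenthetical --- and which the paper passes over silently --- is that the block corner $(i_0-n,j_0-n)$ lies in $S_{i_0 j_0}^{n}$ but not in $T_{i_0 j_0}^{n}$, so the final replacement $f(i_0-n,j_0-n)\to V(i_0-n,j_0-n)$ is not justified by interpolation on $T$ alone; what is rigorously established in both accounts is the identity on $\Pi_{2n-1}$, i.e., exactness of the predictor whenever the data themselves come from such a surface, which is the intended reading.
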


\begin{proof}
We transform the coordinate of each data point in $T_{i_0 j_0}^n$ to a new coordinate as $(i_0-k_1, j_0-k_2) \rightarrow (k_1, k_2)$.

Then, using their new coordinates and data values, we can construct a linear system with $n(2n+1)$ equations as
\begin{small}
\begin{equation} \label{eq: system}
V(k_1,k_2) = \sum\limits_{0 \leq i+j \leq 2n-1}^{i,j \geq 0} a_{i,j}k_1^i k_2^j,
\end{equation}
\end{small}
where $0 \leq k_1+k_2 \leq 2n-1, k_1,k_2 \geq 0$.

Let us denote $F$ as follows.
\begin{small}
\begin{equation} \label{eq: F}
F = \sum\limits_{0 \leq k_1,k_2 \leq n}^{(k_1,k_2) \neq (0,0)} (-1)^{k_1+k_2+1} \binom{n}{k_1} \binom{n}{k_2} V(k_1,k_2)
\end{equation}
\end{small}

For any coefficient $a_{l,m}$, $\sum\limits_{0 \leq i+j \leq 2n-1}^{i,j \geq 0} a_{i,j}k_1^i k_2^j$ only has one term containing $a_{l,m}$, which is $k_1^l k_2^m \cdot a_{l,m}$.

Also, from equations (\ref{eq: system}) and (\ref{eq: F}), $F$ contains
$(\sum\limits_{0 \leq k_1,k_2 \leq n}^{(k_1,k_2) \neq (0,0)} (-1)^{k_1+k_2+1} \binom{n}{k_1} \binom{n}{k_2} k_1^l k_2^m) \cdot a_{l,m}$.

And because
\begin{small}
\begin{align*}
&\sum\limits_{0 \leq k_1,k_2 \leq n}^{(k_1,k_2) \neq (0,0)} (-1)^{k_1+k_2+1} \binom{n}{k_1} \binom{n}{k_2} k_1^l k_2^m \\
&=  \sum\limits_{0 \leq k_1,k_2 \leq n} (-1)^{k_1+k_2+1} \binom{n}{k_1} \binom{n}{k_2} k_1^l k_2^m + 0^{l+m} \\
&= -\sum\limits_{0 \leq k_1,k_2 \leq n} (-1)^{k_1+k_2} \binom{n}{k_1} \binom{n}{k_2} k_1^l k_2^m + 0^{l+m} \\
&= -\sum\limits_{0 \leq k_1 \leq n} (-1)^{k_1} \binom{n}{k_1} k_1^l \cdot \sum\limits_{0 \leq k_2 \leq n} (-1)^{k_2} \binom{n}{k_2} k_2^m + 0^{l+m}.
\end{align*}
\end{small}

For $l+m \leq 2n+1$, either $l$ or $m$ is smaller than $n$. Also, from the theory of finite differences \cite{brenner}, $\sum\limits_{0 \leq i \leq n} (-1)^{i} \binom{n}{i} P(x) = 0$ for any polynomial $P(x)$ of degree less than $n$, so either $\sum\limits_{0 \leq k_1 \leq n} (-1)^{k_1} \binom{n}{k_1} k_1^l = 0$ or $\sum\limits_{0 \leq k_2 \leq n} (-1)^{k_2} \binom{n}{k_2} k_2^m = 0$.

Therefore, $F$ contains $0^{l+m} \cdot a_{l,m}$, so \\ $F = \sum\limits_{0 \leq l+m \leq 2n-1}^{l,m \geq 0} 0^{l+m} \cdot a_{l,m} = a_{0,0}$ and \\
\resizebox{\hsize}{!}{$f(0,0) = a_{0,0} = \sum\limits_{0 \leq k_1,k_2 \leq n}^{(k_1,k_2) \neq (0,0)} (-1)^{k_1+k_2+1} \binom{n}{k_1} \binom{n}{k_2} V(k_1,k_2)$}.

We transform the current coordinate to the previous one reversely, namely, $(k_1, k_2) \rightarrow (i_0-k_1, j_0-k_2)$. Thus, \\
\resizebox{\hsize}{!}{$f(i_0,j_0) = \sum\limits_{0 \leq k_1,k_2 \leq n}^{(k_1,k_2) \neq (0,0)} (-1)^{k_1+k_2+1} \binom{n}{k_1} \binom{n}{k_2} V(i_0-k_1,j_0-k_2)$}.
\end{proof}

From this theorem, we know that the value of  $(i_0, j_0)$ on the prediction surface, $f(i_0, j_0)$, can be expressed by the linear combination of the data values in $S_{i_0 j_0}^n$. Hence, we can use the value of $f(i_0, j_0)$ as our predicted value for $V(i_0, j_0)$. In other words, we build our prediction model using the data values in $S_{i_0 j_0}^n$ as follows.
\begin{equation}
\resizebox{\hsize}{!}{
$f(i_0,j_0) = \sum\limits_{0 \leq k_1,k_2 \leq n}^{(k_1,k_2) \neq (0,0)} (-1)^{k_1+k_2+1} \binom{n}{k_1} \binom{n}{k_2} V(i_0-k_1,j_0-k_2)$
}
\end{equation}

We call this prediction model using \textit{n}-layer data subset $S_{i_0 j_0}^n$ the ``\textit {n-layer prediction model}," consequently, our proposed model can be called a \textbf {multilayer prediction model}.

Also, we can derive a generic formula of the multilayer prediction model for any dimensional data sets. Because of space limitations, we give the formula as follows,
\begin{small}
\begin{equation} \label{eq: predmodel}
\begin{aligned}
f(x_1, \cdots , x_d) = & \sum\limits_{0 \leq k_1,\cdots, k_d \leq n}^{(k_1,\cdots, k_d) \neq (0,\cdots,0)} -\prod\limits_{j=1}^{d}(-1)^{k_j}\binom{n}{k_j} \\
& \cdot V(x_1-k_1,\cdots, x_d-k_d),
\end{aligned}
\end{equation}
\end{small}
where $d$ is the dimensional size of the data set and $n$ represents the ``\textit{n}-layer" used in the prediction model. Note that Lerenzo predictor \cite{lerenzo} is a special case of our multi-dimensional prediction model when $n = 1$.

\subsection{In-Depth Analysis of the Best Layer for Multilayer Prediction Model}
In Subsection III-A, we developed a general prediction model for multidimensional data sets. %by using nearby multi-layer data values. 
Based on this model, we need to answer another critical question: How many layers should we use for the prediction model during the compression process? In other words, we want to find the best $n$ for equation (\ref{eq: predmodel}).

Why does there have to exist a best $n$? We will use two-dimensional data sets to explain. We know that a better $n$ can result in a more accurate data prediction, and a more accurate prediction will bring us a better compression performance, including improvements in compression factor, compression error, and compression/decompression speed. On the one hand, a more accurate prediction can be achieved by increasing the number of layers, which will bring more useful information along multiple dimensions. On the other hand, we also note that data from further distance will bring more uncorrelated information (noise) into the prediction, which means that too many layers will %also bring unnecessary noises into 
degrade the accuracy of our prediction. Therefore, we infer that there has to exist a best number of layers for our prediction model.

How can we get the best $n$ for our multilayer prediction model? 

  For a two-dimensional data set, we first need to get prediction formulas for different layers by substituting 1, 2, 3, and so forth into the generic formula of our prediction model (as shown in equation (\ref{eq: predmodel})). The formulas are shown in Table \ref{tab: predformula}.

\begin{table}[t]
\centering
\caption{Formulas of 1, 2, 3, 4-layer prediction for two-dimensional data sets}
\vspace{-2mm}
\label{tab: predformula}
\begin{adjustbox}{max width=0.48\textwidth}
\begin{tabular}{|c|l|}
\hline
        & \multicolumn{1}{c|}{Prediction Formula}                            \\ \hline
1-Layer & $f(i_0,j_0) = V(i_0,j_0-1)+V(i_0-1,j_0)-V(i_0-1,j_0-1)$ \\ \hline
2-Layer & \begin{tabular}[c]{@{}l@{}}$f(i_0,j_0) = 2V(i_0-1,j_0)+2V(i_0,j_0-1)$\\ $-4V(i_0-1,j_0-1)-V(i_0-2,j_0)-V(i_0,j_0-2)$ \\ $+2V(i_0-2,j_0-1)+2V(i_0-1,j_0-2)-V(i_0-2,j_0-2)$\end{tabular}           \\ \hline
3-Layer &   \begin{tabular}[c]{@{}l@{}}$f(i_0,j_0) = 3V(i_0-1,j_0)+3V(i_0,j_0-1)$\\ $-9V(i_0-1,j_0-1)-3V(i_0-2,j_0)-3V(i_0,j_0-2)$ \\ $+9V(i_0-2,j_0-1)+9V(i_0-1,j_0-2)-9V(i_0-2,j_0-2)$ \\ $+V(i_0-3,j_0)+V(i_0,j_0-3)$ \\ $-3V(i_0-3,j_0-1)-3V(i_0-1,j_0-3)$ \\  $+3V(i_0-3,j_0-2)+3V(i_0-2,j_0-3)-V(i_0-3,j_0-3)$ \end{tabular}                                                      \\ \hline
4-Layer &  \begin{tabular}[c]{@{}l@{}}$f(i_0,j_0) = 4V(i_0-1,j_0)+4V(i_0,j_0-1)$\\ $-16V(i_0-1,j_0-1)-6V(i_0-2,j_0)-6V(i_0,j_0-2)$ \\ $+24V(i_0-2,j_0-1)+24V(i_0-1,j_0-2)$ \\ $-36V(i_0-2,j_0-2)+4V(i_0-3,j_0)+4V(i_0,j_0-3)$ \\ $-16V(i_0-3,j_0-1)-16V(i_0-1,j_0-3)+24V(i_0-3,j_0-2)$ \\  $+24V(i_0-2,j_0-3)-16V(i_0-3,j_0-3)$ \\ $-V(i_0-4,j_0)-V(i_0,j_0-4)+4V(i_0-4,j_0-1)$ \\ $+4V(i_0-1,j_0-4)-6V(i_0-4,j_0-2)-6V(i_0-2,j_0-4)$ \\ $+4V(i_0-4,j_0-3)+4V(i_0-3,j_0-4) - V(i_0-4,j_0-4)$ \end{tabular}                                                        \\ \hline
\end{tabular}
\end{adjustbox}
\vspace{-4mm}
\label {tab:predformula}
\end{table}

Then we introduce a term called the ``\textit{prediction hitting rate}," which is the proportion of the predictable data in the whole data set. We define a data point as ``\textit{predictable data}" if the difference between its original value and predicted value is not larger than the error bound. We denote the prediction hitting rate by $R_{PH} = \frac{N_{PH}}{N}$, where $N_{PH}$ is the number of predictable data points and $N$ is the size of the data set. 

In the climate simulation ATM data sets example, the hitting rates are calculated in Table \ref {tab: bestlayerATM}, based on the prediction methods described above. Here the second column shows the prediction hitting rate by using the \textbf {original} data values, denoted by
$R_{PH}^{orig}$. In this case, 2-layer prediction will be more accurate than other layers if performing the prediction on the original data values. However, in order to guarantee that the compression error (absolute or value-range-based relative) falls into the user-set error bounds, the compression algorithm must use the preceding decompressed data values instead of the original data values. Therefore, the last column of Table \ref {tab: bestlayerATM} shows the hitting rate of the prediction by using preceding \textbf {decompressed} data values, denoted by $R_{PH}^{decomp}$. In this case, 1-layer prediction will become the best one for the compression algorithm on ATM data sets.

\begin{table}[t]
\centering
\caption{Prediction hitting rate using different layers for the prediction model based on original and decompressed data values on ATM data sets}
\vspace{-2mm}
\label{tab: bestlayerATM}
\begin{tabular}{|l|c|c|}
\hline
        & $R_{PH}^{orig}$ & $R_{PH}^{decomp}$ \\ \hline
1-Layer & 21.5\%       & \textbf {19.2}\%         \\ \hline
2-Layer & \textbf {37.5}\%       & 6.5\%         \\ \hline
3-Layer & 25.8\%       & 9.8\%         \\ \hline
4-Layer & 14.5\%       & 5.9\%         \\ \hline
\end{tabular}
\vspace{-1.0\baselineskip}
\end{table}

Since the best layer \textit{n} is data-dependent, different scientific data sets may have different best layers. Thus, we give users an option to set the value of layers in the compression process. The default value in our compressor is $n=1$. %Based on the Table \ref{tab: bestlayer}, we set the default value in our compressor as $n = 1$ for two-dimensional data sets and $n = $ for three-dimensional data set.

%\begin{table}[]
%\centering
%\caption{Best Layer for production data sets in different 2D and 3D HPC applications}
%\label{tab: bestlayer}
%\begin{tabular}{|l|l|l|l|l|}
%\hline
%\textbf{2D Application} & \textbf{Best Layer} &  & \textbf{3D Application} & \textbf{Best Layer} \\ \hline
%ATM                     &                     &  & Hurricane               &                     \\ \hline
%CICE                    &                     &  &                         &                     \\ \hline
%                        &                     &  &                         &                     \\ \hline
%                        &                     &  &                         &                     \\ \hline
%\end{tabular}
%\end{table}

\section{AEQVE: Adaptive Error-controlled Quantization and Variable-length Encoding} \label{sec: AEQVE}

In this section, we present our adaptive error-controlled quantization and variable-length encoding model, namely, AEQVE, which can further optimize the compression quality. First, we introduce our quantization method, which is completely different from the traditional one. Second, using the same logic from Subsection III-B, we develop an adaptive solution to optimize the number of intervals in the error-controlled quantization. Third, we show the fairly uneven distribution produced by our quantization encoder. Finally, we reduce the data size significantly by using the variable-length encoding technique on the quantization codes.

\subsection{Error-Controlled Quantization}
The design of our error-controlled quantization is shown in Figure \ref{fig: f2}. First, we calculate the predicted value by using the multilayer prediction model proposed in the preceding section. We call this predicted value the ``\textit {first-phase predicted value}," represented by the red dot in Fig. \ref{fig: f2}. Then, we expand $2^m-2$ values from the first-phase predicted value by %increasing or reducing the even-fold of
scaling the error bound linearly; we call these values ``\textit {second-phase predicted values}," represented by the orange dots in Fig. \ref{fig: f2}. The distance between any two adjacent predicted values equals twice the error bound. Note that each predicted value will also be expanded one more error bound in both directions to form an interval with the length of twice the error bound. This will ensure that all the intervals are not overlapped. 

If the real value of the data point falls into a certain interval, we mark it as predictable data and use its corresponding predicted value from the same interval to represent the real value in the compression. In this case, the difference between the real value and predicted value is always lower than the error bound. However, if the real value doesn't fall into any interval, we mark the data point as unpredictable data. Since there are $2^m-1$ intervals, we use $2^m-1$ codes to encode these $2^m-1$ intervals. Since all the predictable data can be encoded as the code of its corresponding interval and since all the unpredictable data will be encoded as another code, we need $m$ bits to encode all $2^m$ codes. For example, we use the codes of $1, \cdots, 2^{m-1}, \cdots, 2^{m}-1$ to encode predictable data and use the code of $0$ to encode unpredictable data. This process is quantization encoding.

Note that our proposed error-controlled quantization is totally different from the traditional quantization technique, \textit{vector quantization}, used in previous lossy compression, such as SSEM \cite{sasaki} and NUMARCK \cite{chen}, in two properties: uniformity and error-control. The vector quantization method is nonuniform, whereas our quantization is uniform. Specifically, in vector quantization, the more concentratedly the data locates, the shorter the quantization interval will be, while the length of our quantization intervals is fixed (i.e. twice the error bound). Therefore, in vector quantization, the compression error cannot be controlled for every data point, especially the points in the intervals with the length longer than twice the error bound. Thus, we call our quantization method as \textit {error-controlled quantization}.

\begin{figure}[t]
\centering
\includegraphics[scale=0.4]{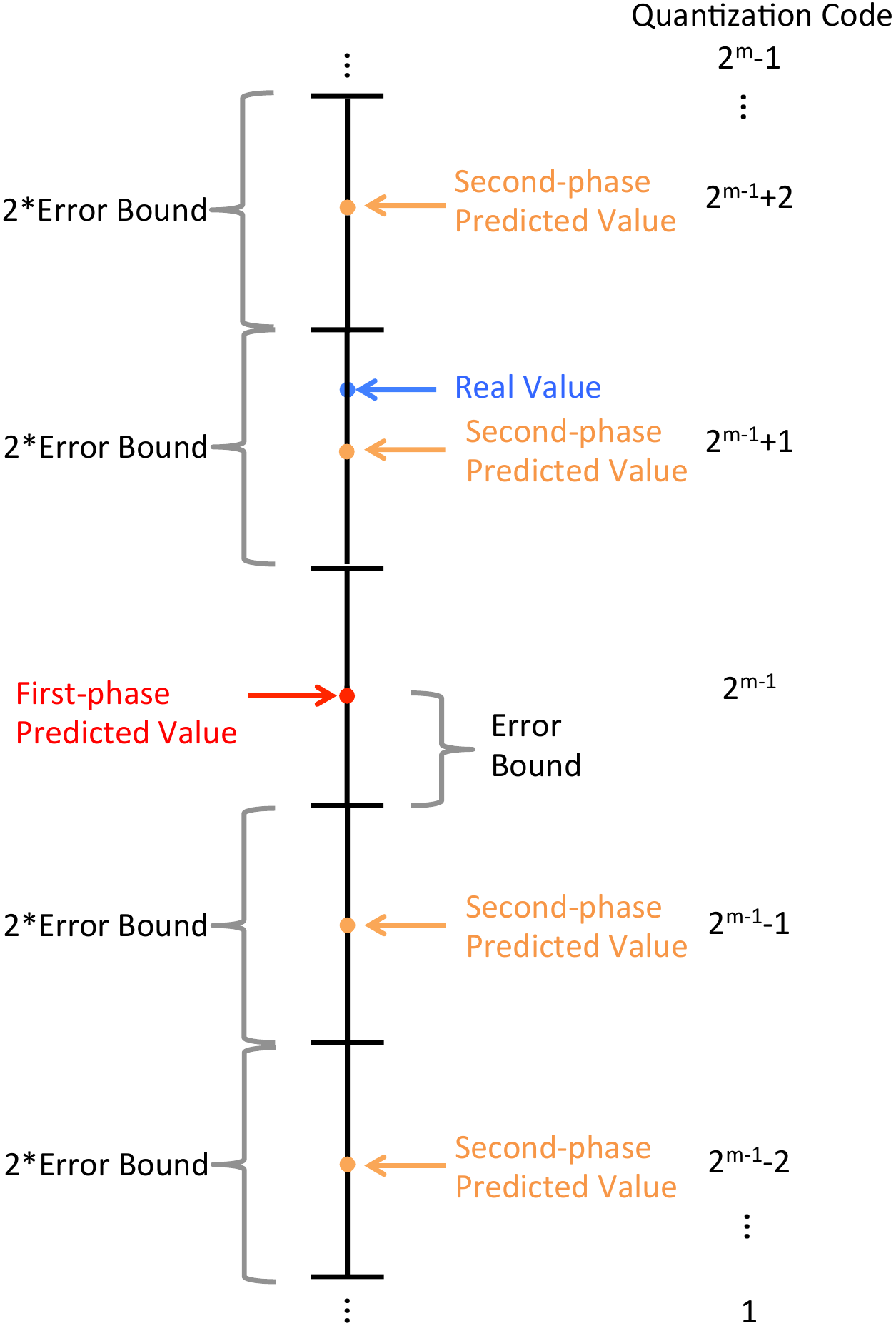}
\caption{Design of error-controlled quantization based on linear scaling of the error bound.}
\label{fig: f2}
\vspace{-4mm}
\end{figure}

The next question is, How many quantization intervals should we use in the error-controlled quantization? We leave this question to Subsection IV-B. First, we introduce a technique we will adopt after the quantization.

Figure \ref{fig: f3} shows an example of the distribution of quantization codes produced by our quantization encoder, which uses 255 quantization intervals to represent predictable data. From this figure, we see that the distribution of quantization codes is uneven and that the degree of nonuniformity of the distribution depends on the accuracy of the previous prediction. In information and coding theory, a strategy, called \textit {variable-length encoding}, is used to compress the nonuniform distribution source. In variable-length encoding, more common symbols will be generally represented using fewer bits than less common symbols. For uneven distribution, we can employ the variable-length encoding to reduce the data size significantly. Note that variable-length encoding is a process of lossless data compression.

Specifically, we use the most popular variable-length encoding strategy, \textit {Huffman coding}. Here we do not describe the Huffman coding algorithm in detail, but we note that Huffman coding algorithm implemented in all the lossless compressors on the market can deal only with the source byte by byte; hence, the total number of the symbols is as higher as to 256 ($2^8$). In our case, however, we do not limit $m$ to be no greater than 8. Hence, if $m$ is larger than 8, more than 256 quantization codes need to be compressed using the Huffman coding. Thus, in our compression, we implement a highly efficient Huffman coding algorithm that can handle a source with any number of quantization codes. % We will show how much data will be reduced after the use of variable-length encoding.

\subsection{Adaptive Scheme for Number of Quantization Intervals}
In Subsection IV-A, our proposed compression algorithm encodes the predictable data with its corresponding quantization code and then uses variable-length encoding to reduce the data size. A question remaining: How many quantization intervals should we use?

We use an $m-bit$ code to encode each data point, and the unpredictable data will be stored after a reduction of binary-representation analysis \cite{di}. However, even binary-representation analysis can reduce the data size to a certain extent. Storing the unpredictable data point has much more overhead than storing the quantization codes. Therefore, we should select a value for the number of quantization intervals that is as small as possible but can provide a sufficient prediction hitting rate. Note that the rate depends on the error bound as shown in Figure \ref{fig: f4}. If the error bound is too low (e.g., $eb_{rel} = 10^{-7}$), the compression is close to lossless, and achieving a high prediction hitting rate is difficult. Hence, we focus our research on a reasonable range of error bounds, $eb_{rel} \geq 10^{-6}$. 

\begin{figure}[t]
\centering
\includegraphics[scale=0.34]{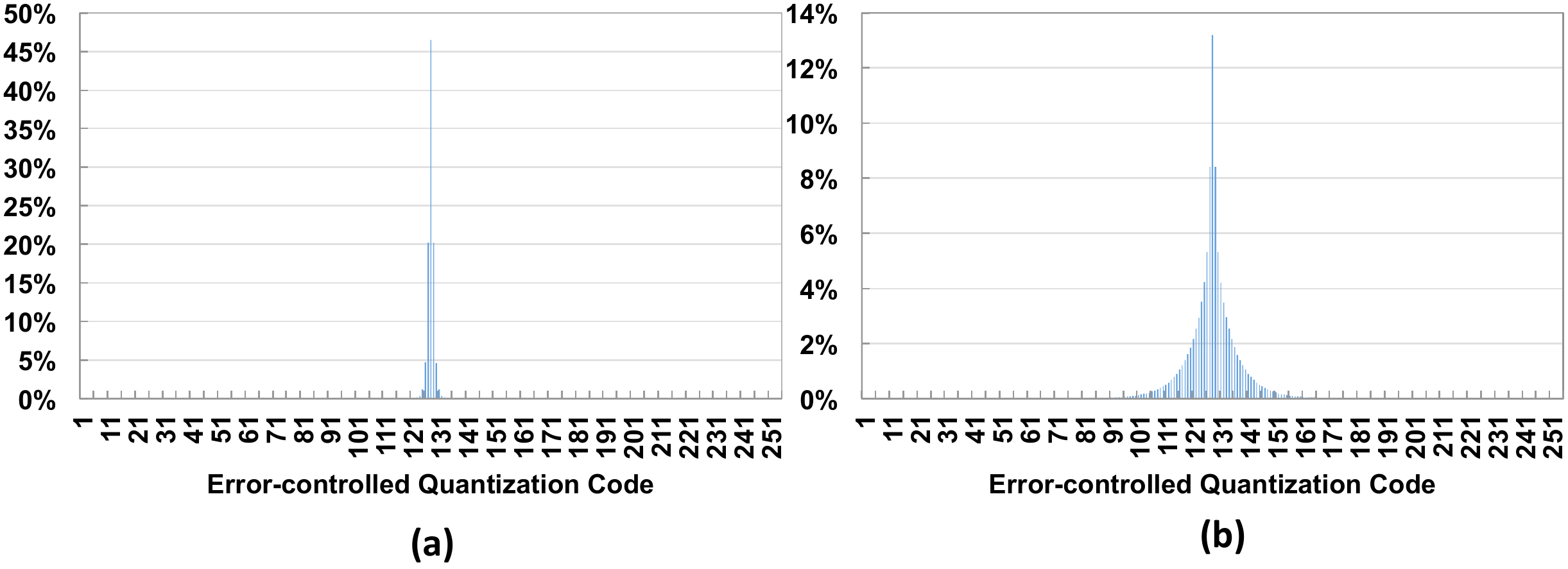}
\vspace{-3mm}
\caption{Distribution produced by error-controlled quantization encoder on ATM data sets of (a) value-range-based relative error bound = $10^{-3}$ and (b) value-range-based relative error bound = $10^{-4}$ with 255 quantization intervals ($m=8$).}
\label{fig: f3}
\vspace{-4mm}
\end{figure}

Now we introduce our adaptive scheme for the number of quantization intervals used in the compression algorithm. Figure \ref{fig: f4} shows the prediction hitting rate with different value-range-based relative error bounds using different numbers of quantization intervals on 2D ATM data sets and 3D hurricane data sets. It indicates that the prediction hitting rate will suddenly descend at a certain error bound from over 90\% to a relatively low value. For example, if using 511 quantization intervals, the prediction hitting rate will drop from 97.1\% to 41.4\% at $eb_{rel} = 10^{-6}$. Thus, we consider that 511 quantization intervals can cover only the value-range-based relative error bound higher than $10^{-6}$. However, different numbers of quantization intervals have different capabilities to cover different error bounds. Generally, more quantization intervals will cover lower error bounds. Baker et al. \cite{baker} point out that $eb_{rel} = 10^{-5}$ is enough for climate research simulation data sets, such as ATM data sets. Thus, based on Fig. \ref{fig: f4}, for ATM data sets, using 63 intervals and 511 intervals are good choices for $eb_{rel} = 10^{-4}$ and $eb_{rel} = 10^{-5}$ respectively.
%Thus, for ATM data sets, using 511 quantization intervals is a good choice.
But, for hurricane data sets, we suggest using 15 intervals for $eb_{rel} = 10^{-4}$ and 63 intervals for $eb_{rel} = 10^{-5}$. 
%using 255 quantization intervals is enough to cover $eb_{rel} \geq 10^{-5}$.

%Based on our experiments on a large variety of scientific data sets, using 511 quantization intervals can achieve a good prediction hitting rate within reasonable error bounds. Thus, in our compressor we set 511 quantization intervals ($m = 9$) as the default. 
In our compression algorithm, a user can determine the number of quantization intervals by setting a value for $m$ ($2^m-1$ quantization intervals). However, if it is unable to achieve a good prediction hitting rate (smaller than $\theta$) in some error bounds, our compression algorithm will suggest that the user increases the number of quantization intervals. On contrast, the user should reduce the number of quantization intervals until a further reduction results the prediction hitting rate smaller than $\theta$. In practice, sometimes a user's requirement for compression accuracy is stable; therefore, the user can tune a good value for the number of quantization intervals and get optimized compression factors in the following large-scale compression.

\begin{figure}[t]
\centering
\includegraphics[scale=0.35]{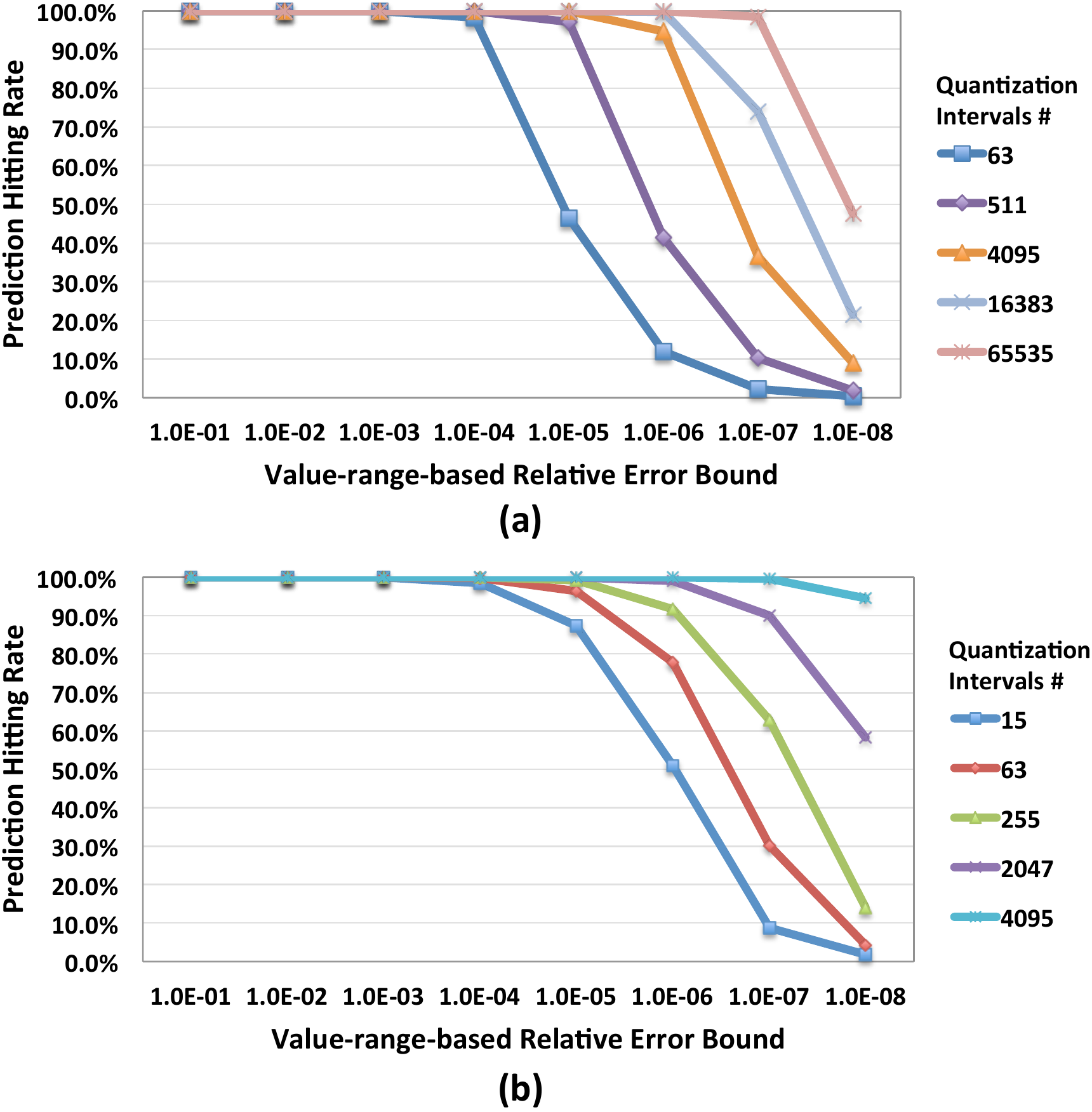}
\caption{Prediction hitting rate with decreasing error bounds using different quantization intervals on (a) ATM data sets and (b) hurricane data sets.}
\label{fig: f4}
\vspace{-4mm}
\end{figure}

\begin{figure}[t]
\centering
\includegraphics[scale=0.60]{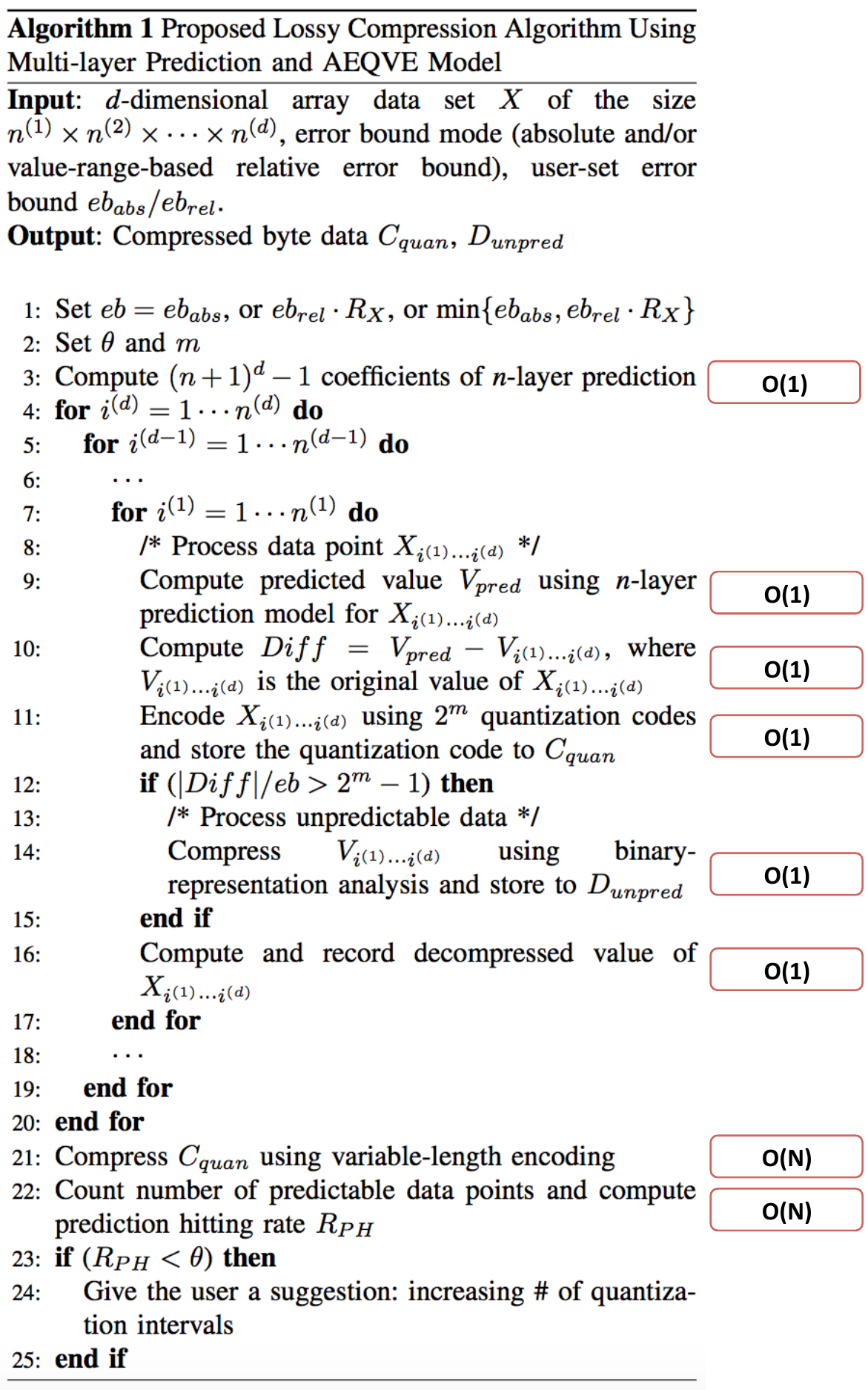}
\caption{Proposed lossy compression algorithm using Multi-layer Prediction and AEQVE Model}
\label{fig: f10}
\vspace{-4mm}
\end{figure}

\textbf {Algorithm 1} in Figure \ref{fig: f10} outlines our proposed lossy compression algorithm. Note that the input data is a \textit {d}-dimensional floating-point array of the size $N = n^{(1)} \times n^{(2)} \times \cdots \times n^{(d)}$, where $n^{(1)}$ is the size of the lowest dimension and $n^{(d)}$ is the size of the highest dimension. %In our algorithm, we compress the data from low dimension to high dimension.
Before processing the data (\textcolor{red} {line 1-3}), our algorithm needs to compute the $(n+1)^d-1$ coefficients (based on Equation \ref{eq: predmodel}) of the \textit {n}-layer prediction method only once (\textcolor{red}{line 3}). While processing the data (\textcolor{red} {line 4-20}), first, the algorithm computes the predicted value for the current processing data point using the \textit {n}-layer prediction method (\textcolor{red} {line 9}). Next, the algorithm computes the difference between the original and predicted data value and encodes the data point using $2^m$ quantization codes (\textcolor{red}{line 10-11}). Then, if the data point is unpredictable, the algorithm adopts the binary-representation analysis (\textcolor{red} {line 14}) proposed in \cite{di} to reduce its storage. Lastly, the algorithm computes and records the decompressed value for the future prediction (\textcolor{red} {line 16}). After processing each data point (\textcolor{red} {line 21-25}), the algorithm will compress the quantization codes using the variable-length encoding technique (\textcolor{red} {line 21}) and count the number of predictable data points (\textcolor{red} {line 22}). If the prediction hitting rate is lower than the threshold $\theta$, our algorithm will suggest that the user increases the quantization interval number (\textcolor{red} {line 23-25}). The computation complexity of each step is shown in Figure \ref{fig: f4}. Note that (1) lines 3 and 9 are $O(1)$, since they depend only on the number of layers $n$ used in the prediction rather than the data size $N$; (2) although line 14 is $O(1)$, binary-presentation analysis is more time-consuming than the other $O(1)$ operations, such as lines 9-11 and 16, and hence increasing the prediction hitting rate can result in faster compression significantly; and (3) since we adopt the Huffman coding algorithm for the variable-length encoding and the total number of the symbols (i.e., quantization intervals) is $2^m$ (such as 255), line 22 is its theoretical complexity $O$($N\log{}{2^m}$) = $O$($m$$N$) = $O$($N$). Therefore, the overall complexity is $O(N)$.
\section{Empirical Performance Evaluation}
\label{sec: evaluation}

In this section, we evaluate our compression algorithm, namely SZ-1.4, on various single-precision floating-point data sets: 2D \textit{ATM} data sets from climate simulations \cite{cesm}, 2D \textit{APS} data sets from X-ray scientific research \cite{aps}, and 3D \textit{hurricane} data sets from a hurricane simulation \cite{hurricane}, as shown in Table \ref{tab: data}. Also, we compare our compression algorithm SZ-1.4 with state-of-the-art losseless (i.e., GZIP \cite{gzip} and FPZIP \cite{fpzip}) and lossy compressors (i.e., ZFP \cite{lindstrom}, SZ-1.1 \cite{di}, and ISABELA \cite{laksh}), based on the metrics mentioned in Section III. We conducted our experiments on a single core of an iMac with 2.3 GHz Intel Core i7 processors and 32 GB of 1600 MHz DDR3 RAM.

\begin{table}[]
\centering
\caption{Description of data sets used in empirical performance evaluation}
\label{tab: data}
\begin{adjustbox}{max width=0.48\textwidth}
\begin{tabular}{|c|c|c|c|c|}
\hline
          & Data Source               & Dimension Size & Data Size & File Number \\  \hline
\textbf{ATM}       & Climate simulation   & $1800 \times 3600$        & $2.6$ TB & 11400\\ \hline
\textbf{APS}       & X-ray instrument     & $2560 \times 2560$         & $40$ GB & 1518 \\ \hline
\textbf{Hurricane} & Hurricane simulation & $100 \times 500 \times 500$        & $1.2$ GB & 624 \\ \hline
\end{tabular}
\end{adjustbox}
\vspace{-6mm}
\end{table}

\begin{table*}[t]\scriptsize
\centering
\caption{Comparison of Pearson correlation coefficient using various lossy compressors with different maximum compression errors}
\vspace{-2mm}
\label{tab: correlation}
\begin{tabular}{|c|c|c|c|c|c|c|c|c|}
\hline
\multirow{2}{*}{\textbf{\begin{tabular}[c]{@{}c@{}}Maximum\\ $e_{rel}$\end{tabular}}} & \multicolumn{3}{c|}{\textbf{ATM}}                                & \multirow{7}{*}{} & \multirow{2}{*}{\textbf{\begin{tabular}[c]{@{}c@{}}Maximum \\ $e_{rel}$\end{tabular}}} & \multicolumn{3}{c|}{\textbf{Hurricane}}                          \\ \cline{2-4} \cline{7-9} 
                                                                                                         & \textit{SZ-1.4} & \textit{ZFP}      & \textit{SZ-1.1}       &                   &                                                                                                          & \textit{SZ-1.4} & \textit{ZFP}      & \textit{SZ-1.1}       \\ \cline{1-4} \cline{6-9} 
$3.3 \times 10^{-3}$                                                                                     & $0.99998$                 & $0.9996$          & $0.99998$          &                   & $2.4 \times 10^{-3}$                                                                                     & $0.998$                 & $0.99995$          & $0.998$          \\ \cline{1-4} \cline{6-9} 
$4.3 \times 10^{-4}$                                                                                     & $\geq 1 - 10^{-6}$        & $\geq 1 - 10^{-7}$ & $\geq 1 - 10^{-6}$          &                   & $1.8 \times 10^{-4}$                                                                                     & $\geq 1 - 10^{-5}$        & $\geq 1 - 10^{-6}$ & $\geq 1 - 10^{-5}$          \\ \cline{1-4} \cline{6-9} 
$2.6 \times 10^{-5}$                                                                                     & $\geq 1 - 10^{-8}$        & $\geq 1 - 10^{-9}$ & $\geq 1 - 10^{-9}$ &                   & $2.5 \times 10^{-5}$                                                                                     & $\geq 1 - 10^{-6}$        & $\geq 1 - 10^{-8}$ & $\geq 1 - 10^{-5}$ \\ \cline{1-4} \cline{6-9} 
$3.4 \times 10^{-6}$                                                                                     & $\geq 1 - 10^{-10}$        & $\geq 1 - 10^{-11}$ & $\geq 1 - 10^{-11}$ &                   & $2.6 \times 10^{-6}$                                                                                     & $\geq 1 - 10^{-8}$        & $\geq 1 - 10^{-9}$ & $\geq 1 - 10^{-7}$ \\ \cline{1-4} \cline{6-9} 
$4.1 \times 10^{-7}$                                                                                     & $\geq 1 - 10^{-12}$        & $\geq 1 - 10^{-13}$ & $\geq 1 - 10^{-13}$ &                   & $2.9 \times 10^{-7}$                                                                                     & $\geq 1 - 10^{-10}$        & $\geq 1 - 10^{-11}$ & $\geq 1 - 10^{-11}$ \\ \hline
\end{tabular}
% \begin{tabular}{|c|c|c|c|c|c|c|c|c|c|c|c|c|}
% \hline
% \multirow{2}{*}{\textbf{$eb_{rel}$}} & \textbf{ATM}             &              &             &                  & \multicolumn{4}{c|}{\textbf{APS}}                                        & \multicolumn{4}{c|}{\textbf{Hurricane}}          \\ \cline{2-13} 
%                                      & \textit{Our compression} & \textit{ZFP} & \textit{SZ} & \textit{ISABELA} & \textit{Our compression} & \textit{ZFP} & \textit{SZ} & \textit{ISABELA} & Our compression & ZFP      & SZ       & ISABELA  \\ \hline
% $10^{-3}$                            & 0.999998                 & 0.999998     & 0.999913    & 1.000000         & 0.999887                 & 1.000000     & 0.999983    & 0.999995         & 0.999582        & 0.999998 & 0.999783 & 1.000000 \\ \hline
% $10^{-4}$                            & 1.000000                 & 1.000000     & 1.000000    & 1.000000         & 0.999999                 & 1.000000     & 1.000000    & 0.999998         & 0.999996        & 0.999999 & 0.999783 & 1.000000 \\ \hline
% $10^{-5}$                            & 1.000000                 & 1.000000     & 1.000000    & 1.000000         & 1.000000                 & 1.000000     & 1.000000    & 1.000000         & 1.000000        & 1.000000 & 1.000000 & /        \\ \hline
% $10^{-6}$                            & 1.000000                 & 1.000000     & 1.000000    & /                & 1.000000                 & 1.000000     & 1.000000    & /                & 1.000000        & 1.000000 & 1.000000 & /        \\ \hline
% \end{tabular}
\vspace{-1.0\baselineskip}
\end{table*}

\subsection{Compression Factor}
First, we evaluated our compression algorithm (i.e., SZ-1.4) based on the \textit {compression factor}. Figure \ref{fig: f5} compares the compression factors of SZ-1.4 and five other compression methods: GZIP, FPZIP, ZFP, SZ-1.1, and ISABELA, with reasonable value-range-based relative error bounds, namely, $10^{-3}$, $10^{-4}$, $10^{-5}$, and $10^{-6}$, respectively. Specifically, we ran different compressors using the absolute error bounds computed based on the above listed ratios and the global data value range and then checked the compression results. Figure \ref{fig: f5} indicates that SZ-1.4 has the best compression factor within these reasonable error bounds. For example, with $eb_{rel} = 10^{-4}$, for ATM data sets, the average compression factor of SZ-1.4 is 6.3, which is 110\% higher than ZFP's 3.0, 70\% higher than SZ-1.1's 3.8, 350\% higher than ISABELA's 1.4, 232\% higher than FPZIP's 1.9, and 430\% higher than GZIP's 1.3. For APS data sets, the average compression factor of SZ-1.4 is 5.2, which is 79\% higher than ZFP's 2.9, 74\% higher than SZ-1.1's 3.0, 340\% higher than ISABELA 1.2, 300\% higher than FPZIP's 1.3, and 372\% higher than GZIP's 1.1. For the hurricane data sets, the average compression factor of SZ-1.4 is 21.3, which is 166\% higher than ZFP's 8.0, 139\% higher than SZ-1.1's 8.9, 1675\% higher than ISABELA's 1.2, 788\% higher than FPZIP's 2.4, and 1538\% higher than GZIP's 1.3. Note that ISABELA cannot deal with some low error bounds; thus, we plot its compression factors only until it fails.

\begin{figure}[t]
\centering
\includegraphics[scale=0.55]{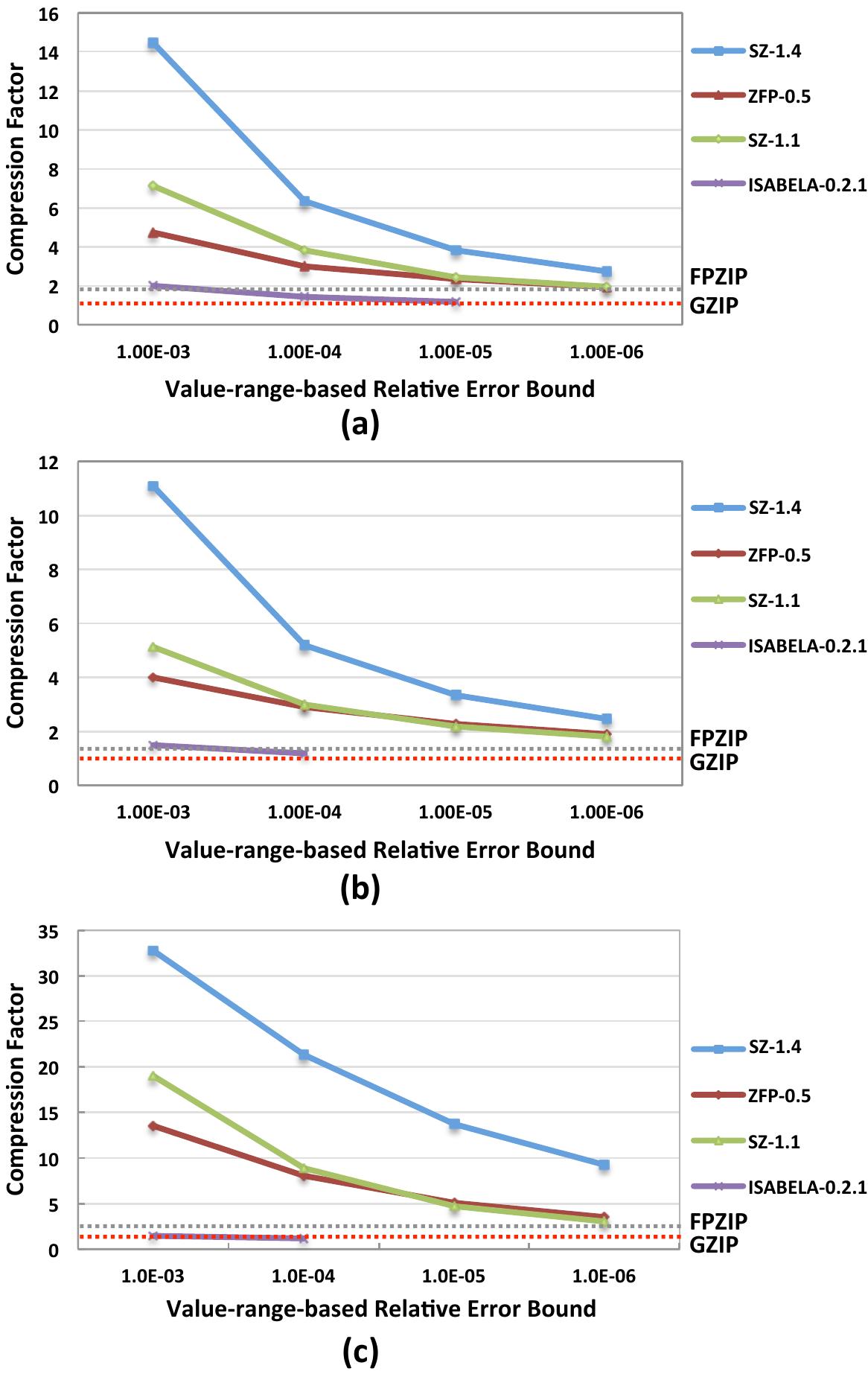}
\vspace{-2mm}
\caption{Comparison of compression factors using different lossy compression methods on (a) ATM, (b) APS, and (c) hurricane data sets with different error bounds.}
\label{fig: f5}
\vspace{-4mm}
\end{figure}

\begin{table}[]
\centering
\caption{Maximum compression errors (normalized to value range) using SZ-1.4 and ZFP with different user-set value-range-based error bounds}
\label{tab: maxerror}
\begin{adjustbox}{max width=0.48\textwidth}
\begin{tabular}{|c|c|c|c|c|}
\hline
\multirow{2}{*}{\textbf{User-set $eb_{rel}$}} & \multicolumn{2}{c|}{\textbf{ATM}}              & \multicolumn{2}{c|}{\textbf{Hurricane}}          \\ \cline{2-5} 
                                              & \textit{SZ-1.4} & \textit{ZFP}        & \textit{SZ-1.4} & \textit{ZFP}          \\ \hline
$10^{-2}$                                     & $1. 0 \times 10^{-2}$    & $3.3\times 10^{-3}$ & $1. 0 \times 10^{-2}$    & $2.4\times 10^{-3}$   \\ \hline
$10^{-3}$                                     & $1. 0 \times 10^{-3}$    & $4.3\times 10^{-4}$ & $1. 0 \times 10^{-3}$    & $1.8 \times 10^{-4}$ \\ \hline
$10^{-4}$                                     & $1. 0 \times 10^{-4}$    & $2.6\times 10^{-5}$ & $1. 0 \times 10^{-4}$    & $2.5\times 10^{-5}$   \\ \hline
$10^{-5}$                                     & $1. 0 \times 10^{-5}$    & $3.4\times 10^{-6}$ & $1. 0 \times 10^{-5}$    & $2.6\times 10^{-6}$   \\ \hline
$10^{-6}$                                     & $1. 0 \times 10^{-6}$    & $4.1\times 10^{-7}$ & $1. 0 \times 10^{-6}$    & $2.9\times 10^{-7}$   \\ \hline
\end{tabular}
\end{adjustbox}
\vspace{-4mm}
\end{table}

\begin{table*}[t]\scriptsize
\centering
\caption{Compression and decompression speeds (MB/s) using SZ-1.4 and ZFP with different value-range-based relative error bounds}
\vspace{-2mm}
\label{tab: speedfactor}
\begin{tabular}{|c|c|c|c|c|c|c|c|c|c|c|c|c|}
\hline
\multirow{3}{*}{\textbf{User-set $eb_{rel}$}} & \multicolumn{4}{c|}{\textbf{ATM}}                                                 & \multicolumn{4}{c|}{\textbf{APS}}                                                 & \multicolumn{4}{c|}{\textbf{Hurricane}}                                           \\ \cline{2-13} 
                                              & \multicolumn{2}{c|}{\textit{SZ-1.4}} & \multicolumn{2}{c|}{\textit{ZFP}} & \multicolumn{2}{c|}{\textit{SZ-1.4}} & \multicolumn{2}{c|}{\textit{ZFP}} & \multicolumn{2}{c|}{\textit{SZ-1.4}} & \multicolumn{2}{c|}{\textit{ZFP}} \\ \cline{2-13} 
                                              & Comp                 & Decomp                 & Comp            & Decomp          & Comp                 & Decomp                 & Comp            & Decomp          & Comp                  & Decomp                & Comp            & Decomp          \\ \hline
$10^{-3}$                                     & 82.3                  & 174.0                  & 118.7           & 181.8           & 77.7                  & 130.5                   & 101.1           & 156.5           & 84.9                    & 176.4                   & 251.6            & 549.6            \\ \hline
$10^{-4}$                                     & 61.5                  & 100.6                   & 100.5            & 139.4           & 64.3                  & 98.0                   & 104.5            & 133.6           & 82.8                    & 164.5                    & 211.3            & 436.0             \\ \hline
$10^{-5}$                                     & 55.4                  & 83.8                   & 87.9            & 121.3            & 52.9                  & 78.8                   & 101.7            & 115.3           & 76.2                    & 149.0                   & 174.3             & 322.8             \\ \hline
$10^{-6}$                                     & 46.1                  & 55.6                   & 83.6            & 105.7            & 44.3                  & 50.8                   & 95.4            & 109.7            & 69.5                    & 118.1                   & 150.9             & 265.4            \\ \hline
\end{tabular}
\vspace{-4mm}
\end{table*}

We note that ZFP might not respect the error bound because of the fixed-point alignment when the value range is huge. For example, the variable \texttt{CDNUMC} in the ATM data sets, its value range is from $10^{-3}$ to $10^{11}$ and the compression error of the data point with the value $6.936168$ is $0.123668$ if using ZFP with $eb_{abs} = 10^{-7}$. When the value range is not such huge, the maximum compression error of ZFP is much lower than the input error bound, whereas the maximum compression errors of the other lossy compression methods, including SZ-1.4, are exactly the same as the input error bound. This means that ZFP is overconservative with regard to the user's accuracy requirement. Table \ref{tab: maxerror} shows the maximum compression errors of SZ-1.4 and ZFP with different error bounds. For a fair comparison, we also evaluated SZ-1.4 by setting its input error bound as the maximum compression error of ZFP, which will make the maximum compression errors of SZ-1.4 and ZFP the same. The comparison of compression factors is shown in Figure \ref{fig: f6}. For example, with the same maximum compression error of $4.3 \times 10^{-4}$, our average compression factor is 162\% higher than ZFP's on the ATM data sets. With the same maximum compression error of $1.8 \times 10^{-4}$, our average compression factor is 71\% higher than ZFP's on the hurricane data sets.

\begin{figure}[t]
\centering
\includegraphics[scale=0.35]{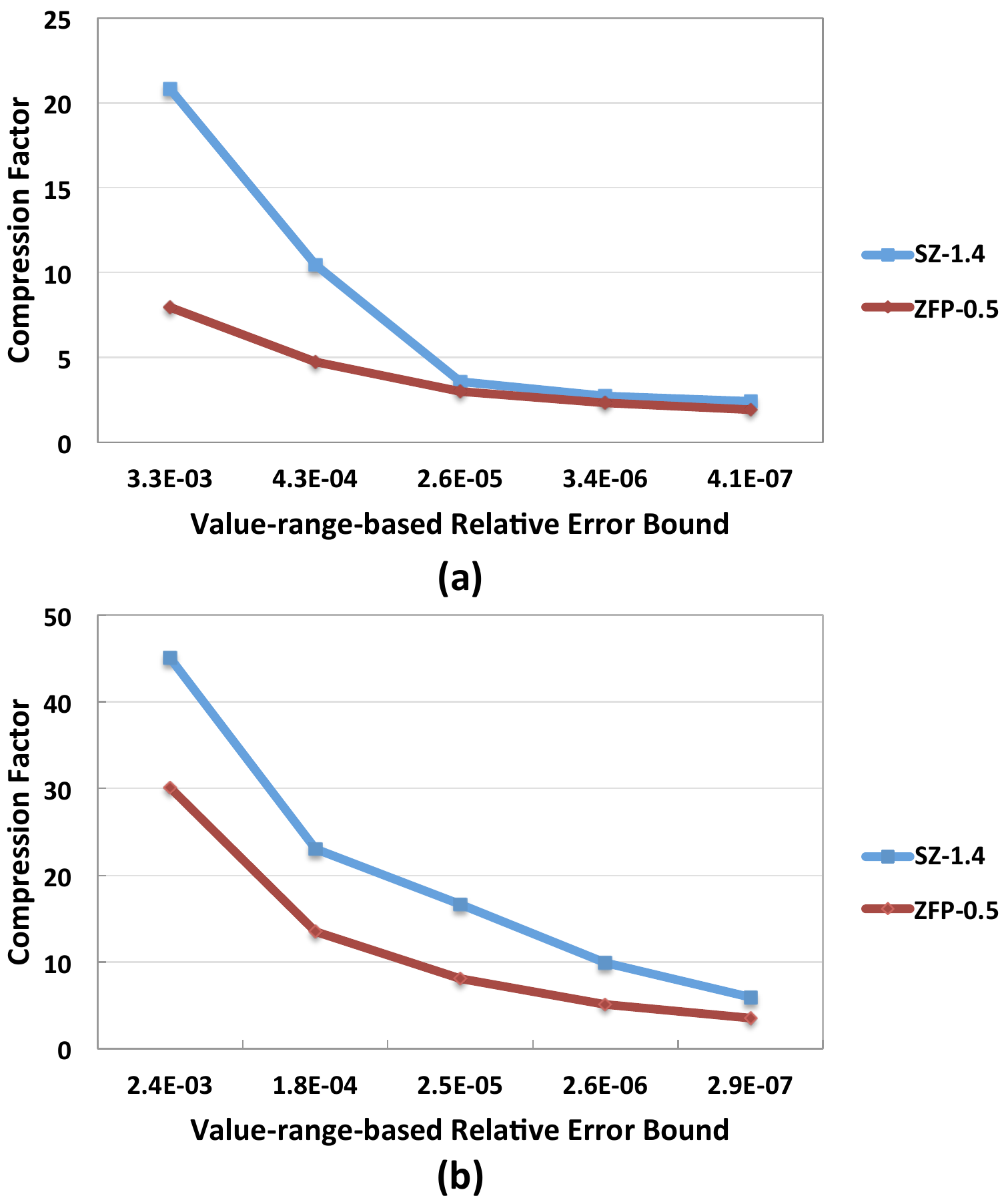}
\vspace{-2mm}
\caption{Comparison of compression factors with same maximum compression error using SZ-1.4 and ZFP on (a) ATM and (b) hurricane data sets.}
\label{fig: f6}
\vspace{-6mm}
\end{figure}

\subsection{Rate-Distortion}
We note that ZFP is designed for a fixed bit-rate, whereas SZ (including SZ-1.1 and SZ-1.4) and ISABELA are designed for a fixed maximum compression error. Thus, for a fair comparison, we plot the \textit{rate-distortion} curve for all the lossy compressors and compare the distortion quality with the same rate. Here rate means bit-rate in bits/value, and we will use the peak signal-to-noise ratio (PSNR) to measure the distortion quality. PSNR is calculated by the equation (\ref{eq: psnr}) in decibel. Generally speaking, in the rate-distortion curve, the higher the bit-rate (i.e., more bits per value) in compressed storage, the higher the quality (i.e., higher PSNR) of the reconstructed data after decompression. 

Figure \ref{fig: f7} shows the rate-distortion curves of the different lossy compressors on the three scientific data sets. The figure indicates that our lossy compression algorithm (i.e., SZ-1.4) has the best rate-distortion curve on the 2D data sets, ATM and APS. Specifically, when the bit-rate equals 8 bits/value (i.e., $CF=4$), for the ATM data sets, the PSNR of SZ-1.4 is about 103 dB, which is 14 dB higher than the second-best ZFP's 89 dB. This 14 dB improvement in PSNR represents an increase in accuracy (or reduction in RMSE) of more than 5 times. Also, the accuracy of our compressor is more than 7 times that of SZ-1.1 and $10^{3}$ times than of ISABELA. For APS data sets, the PSNR of SZ-1.4 is about 96 dB, which is 9 dB higher than ZFP's 87 dB. This 9 dB improvement in PSNR represents an increase in accuracy of 2.8 times. Also, the accuracy of our compressor is 8 times that of SZ-1.1 and 790 times that of ISABELA.

For the 3D hurricane data sets, the rate-distortion curves illustrate that at low bit-rate (i.e., 2 bits/value) the PSNR of SZ-1.4 is close to that of ZFP. In the other cases of bit-rate higher than 2 bits/value, our PSNR is better than ZFP's. Specifically, when the bit-rate is 8 bits/value, our PSNR is about 182 dB, which is 11 dB higher (i.e., 3.5 times in accuracy) than ZFP's 171 dB, and 47 dB higher (i.e., 224 times in accuracy) than SZ-1.1's 135 dB.

Note that we test and show the cases only with the bit-rate lower than 16 bits/value for the three single-precision data sets, which means the compression factors are higher than 2. As we mentioned in Section I, some lossless compressors can provide a compression factor up to 2 \cite{ratana}. It is reasonable to assume that users are interested in lossy compression only if it provides a compression factor of 2 or higher.

\begin{figure}[t]
\centering
\includegraphics[scale=0.55]{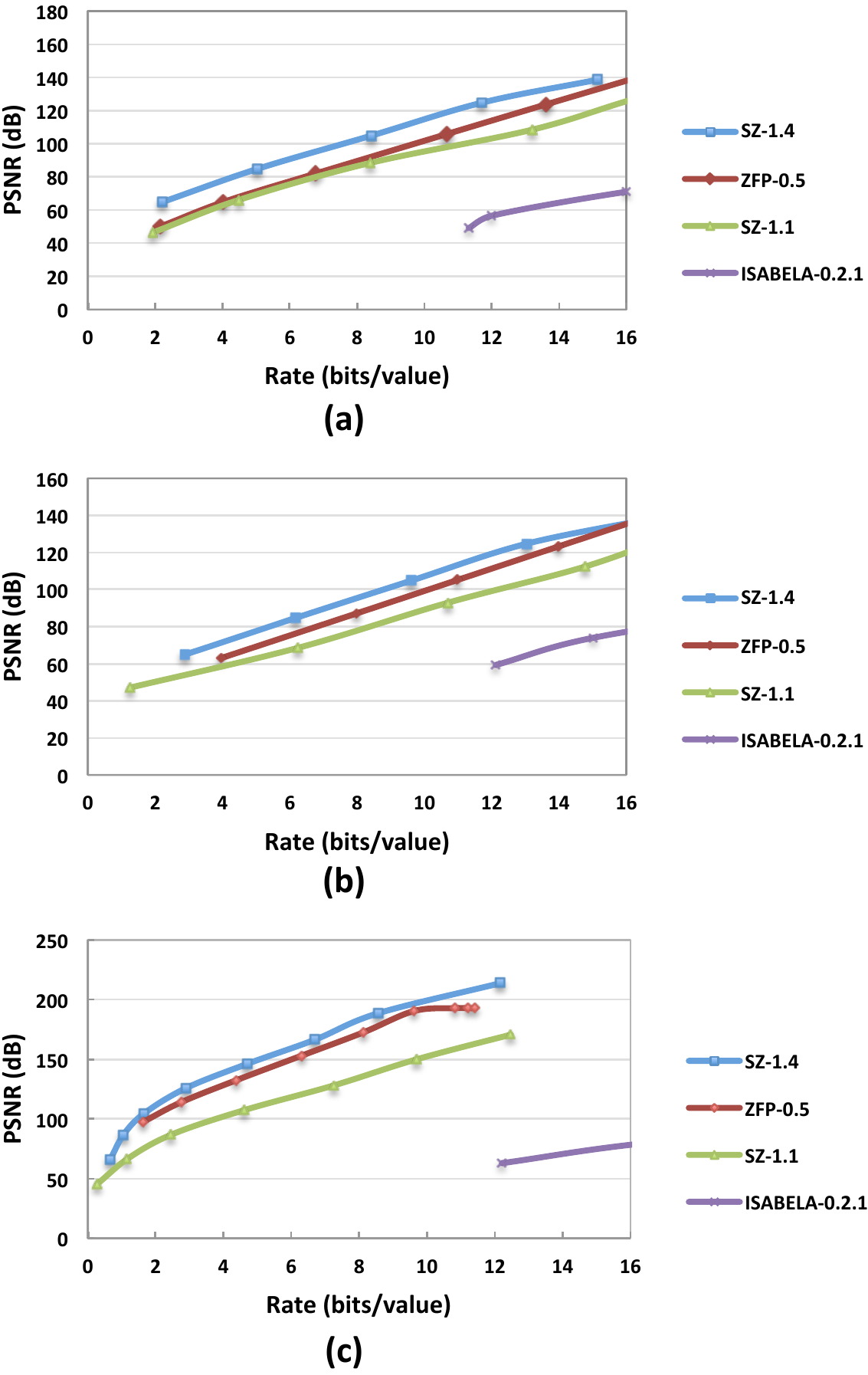}
\vspace{-2mm}
\caption{Rate-distortion using different lossy compression methods on (a) ATM, (b) APS, and (c) hurricane data sets.}
\label{fig: f7}
\vspace{-4mm}
\end{figure}

\subsection{Pearson Correlation}
Next we evaluated our compression algorithm (i.e., SZ-1.4) based on the Pearson correlation coefficient between the original and the decompressed data. Table \ref{tab: correlation} shows the Pearson correlation coefficients using different lossy compression methods with different maximum compression errors. Because of space limitations, we compare SZ-1.4 only with ZFP and SZ-1.1, since from the previous evaluations they outperform ISABELA significantly. We note that we use the maximum compression error of ZFP as the input error bound of SZ-1.4 and SZ-1.1 to make sure that all three lossy compressors have the same maximum compression error. From Table \ref{tab: correlation} we know that all three compressors have ``five nines" or better coefficients (marked with bold) (1) from $4.3 \times 10^{-4}$ to lower value-range-based relative error bounds on the ATM data sets and (2) from $1.8 \times 10^{-4}$ to lower value-range-based relative error bounds on the hurricane data sets. These results mean SZ-1.4 has accuracy in the Pearson correlation of decompressed data similar to that of ZFP and SZ-1.1. 

\subsection{Speed}
Now, let us evaluate the compression and decompression speed of our compressor (i.e., SZ-1.4). %Table \ref{tab: speed} shows the overall speed of different lossy compressors with different error bounds in megabytes per second, including compression and decompression times together. For the 2D ATM and APS data sets, on average, our compressor is 3\% faster than ZFP, 2.2x faster than SZ, and 32x faster than ISABELA. For the 3D hurricane data sets, on average, our compressor is 34.4\% slower than ZFP, 2.4x faster than SZ, and 62x faster than ISABELA. 
We evaluate the compression and decompression speed of different lossy compressors with different error bound in megabytes per second. First, we compare the overall speed of SZ-1.4 with SZ-1.1 and ISABELA's. For the 2D ATM and APS data sets, on average, our compressor is 2.2x faster than SZ-1.1 and 32x faster than ISABELA. For the 3D hurricane data sets, on average, SZ-1.4 is 2.4x faster than SZ-1.1 and 62x faster than ISABELA. Due to space limitations, we do not show the specific values of SZ-1.1 and ISABELA. We then compare the speed of SZ-1.4 and ZFP. Table \ref{tab: speedfactor} shows the compression and decompression speed of SZ-1.4 and ZFP. It illustrates that on average SZ-1.4's compression is 50\% slower than ZFP's and decompression is 48\% slower than ZFP's. Our compression has not been optimized in performance because the primary objective was to reach high compression factors, therefore, we plan to optimize our compression for different architectures and data sets in the future.

\subsection{Autocorrelation of Compression Error}
%Figure \ref{fig: f8} shows the distribution of the compression errors of our lossy compressor and ZFP with the same maximum compression error on ATM and Hurricane data sets. It indicates that the distribution of our compression errors is nearly uniform, while the distribution of ZFP is similar to normal distribution. 
Finally, we analyze the autocorrelation of the compression errors, since some applications require the compression errors to be uncorrelated. %or white noise \cite{wu}. 
We evaluate the autocorrelation of the compression errors on the two typical variables in the ATM data sets, i.e., \texttt{FREQSH} and \texttt{SNOWHLND}. The compression factors of \texttt{FREQSH} and \texttt{SNOWHLND} are $6.5$ and $48$ using SZ-1.4 with $eb_{rel} = 10^{-4}$. Thus, to some extent, \texttt{FREQSH} can represent relatively low-compression-factor data sets, while \texttt{SNOWHLND} can represent relatively high-compression-factor data sets. Figure \ref{fig: f9} shows the first 100 autocorrelation coefficients of our and ZFP's compression errors on these two variables. It illustrates that on the \texttt{FREQSH} the maximum autocorrelation coefficient of SZ-1.4 is $4 \times 10^{-3}$, which is much lower than ZFP's $0.25$. However, on the \texttt{SNOWHLND} the maximum autocorrelation coefficient of SZ-1.4 is about $0.5$, which is higher than ZFP's $0.23$. We also evaluate the autocorrelation of SZ-1.4 and ZFP on the APS and hurricane data sets and observe that, generally, SZ-1.4's autocorrelation is lower than ZFP's on the relatively low-compression-factor data sets, whereas ZFP's autocorrelation is lower than SZ-1.4's on the relatively high-compression-factor data sets. We therefore plan to improve the autocorrelation of compression errors on the relatively high-compression-factor data sets in the future. The effect of compression error autocorrelation being application specific, lossy compressor users might need to understand this effect before using one of the other compressor.

%We assume that the compression errors in the data set follow a stationary process. 
%Figure \ref{fig: f9} shows the first 100 autocorrelation coefficients of our and ZFP's compression errors on these two variables. It illustrates that the maximum autocorrelation coefficient of our compression errors is only about $6 \times 10^{-3}$ on the ATM data sets and $3 \times 10^{-3}$ on the hurricane data sets, which is lower than ZFP's $0.25$ and $0.15$. Thus, our compression errors can be considered as uncorrelated on the ATM and hurricane data sets, while ZFP's compression errors exhibits a higher autocorrelation on these data sets. The effect of compression error autocorrelation being application specific, lossy compressor users might need to understand this effect before using one of the other compressor.
%our compression errors can be considered as uncorrelated or white noise on the ATM and hurricane data sets, but %the correlation of ZFP's compression errors is much stronger 
%ZFP's compression errors seem to be not exactly white noise on the two data sets, which may result in a side effect on the applications requiring white noise errors.
%are not white noise on these two data sets and might have a side effect on the applications that must need white noise compression errors.

% \begin{figure}[t]
% \centering
% \includegraphics[scale=0.30]{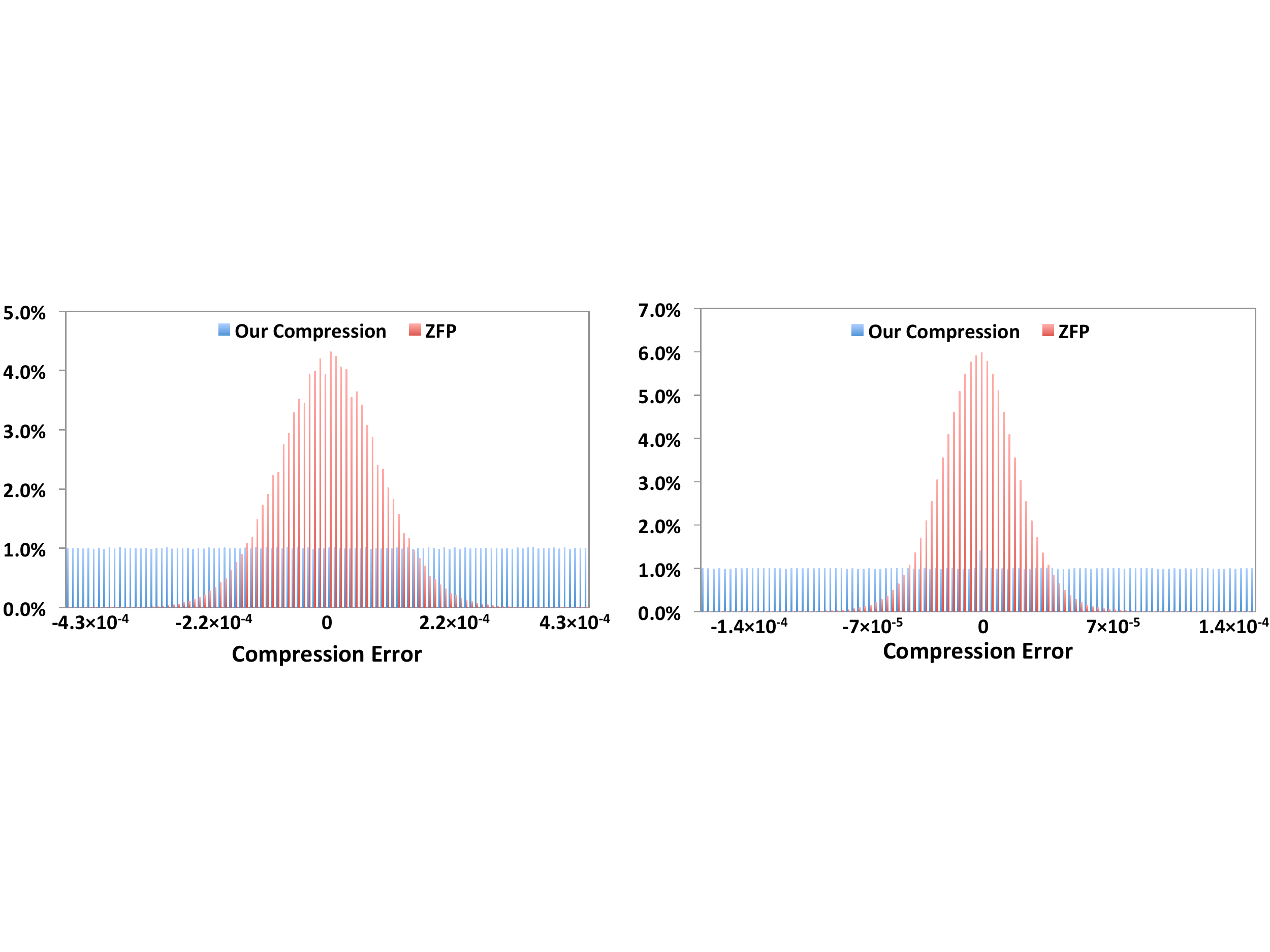}
% \caption{Distribution of compression errors using our lossy compressor and ZFP on (a) ATM and (b) Hurricane data sets.}
% \label{fig: f8}
% \vspace{-1.0\baselineskip}
% \end{figure}

\begin{figure}[t]
\centering
\includegraphics[scale=0.36]{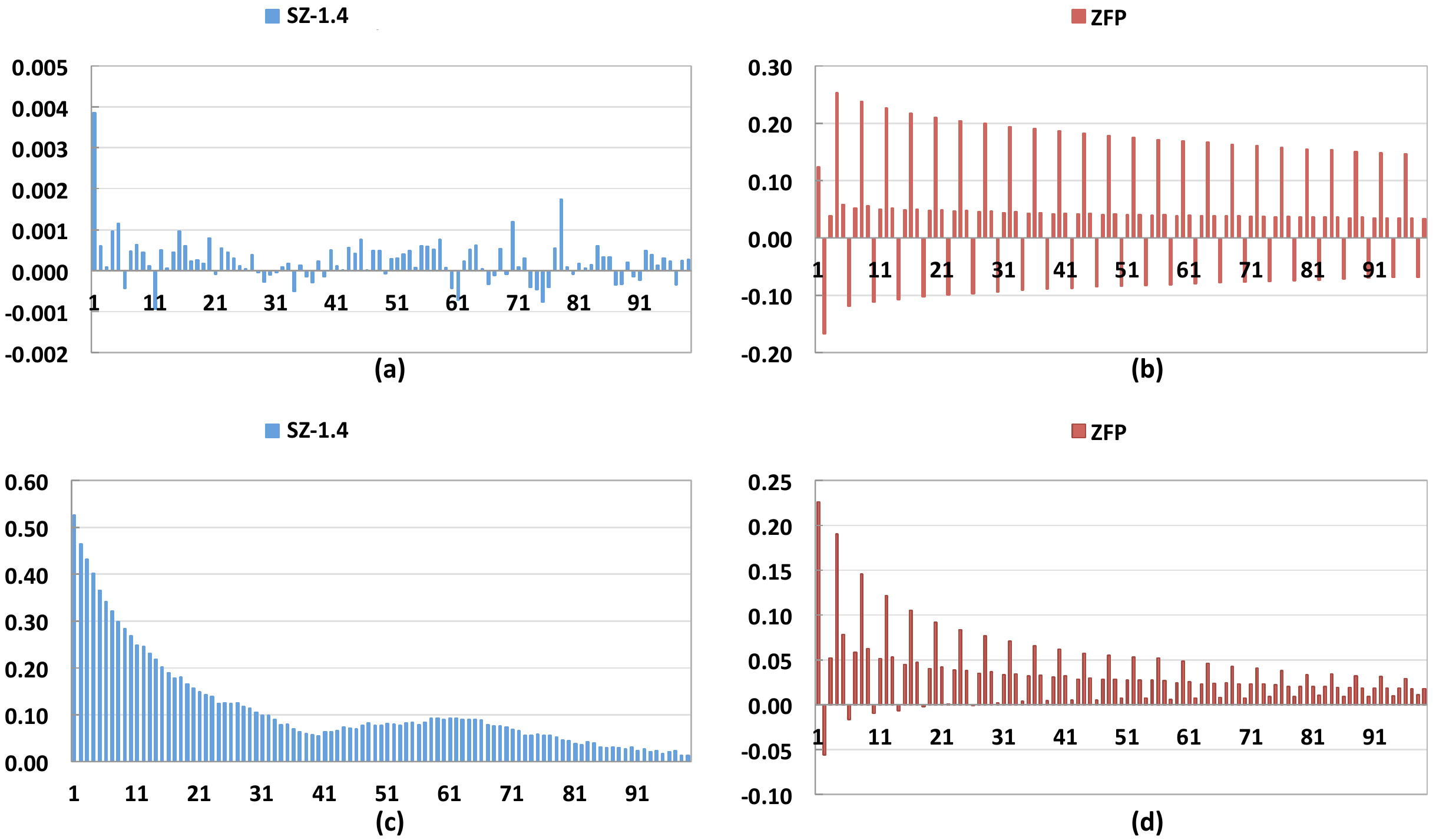}
\caption{Autocorrelation analysis (first 100 coefficients) of compression errors with increasing delays using our lossy compressor and ZFP on variable \texttt{FREQSH} (i.e., (a) and (b)) and variable \texttt{SNOWHLND} (i.e., (c) and (d)) in ATM data sets.}
\label{fig: f9}
\vspace{-2mm}
\end{figure}

\section{Discussion}
\label{sec: discuss}
In this section, we first discuss the parallel use of our compressor (i.e., SZ-1.4) for large-scale data sets. We then perform an empirical performance evaluation on the full 2.5 TB ATM data sets using 1024 cores (i.e., 64 nodes, each node with two Intel Xeon E5-2670 processors and 64 GB DDR3 memory, and each processor has 8 cores) from the Blues cluster at Argonne.

Parallel compression can be classified into two categories: in-situ compression and off-line compression. Our compressor can be easily used as an in-situ compressor embedded in a parallel application. Each process can compress/decompress a fraction of the data that is being held in its memory. For off-line compression, an MPI program or a script can be used to load the data into multiple processes and run the compression separately on them. ATM data sets (as shown in Table \ref{tab: data}), for example, have a total of 11400 files and APS data sets have 1518 files. The users can load these files by multiple processes and run our compressor in parallel, \textit {without inter-process communications}.

\begin{figure}[t]
\centering
\includegraphics[scale=0.46]{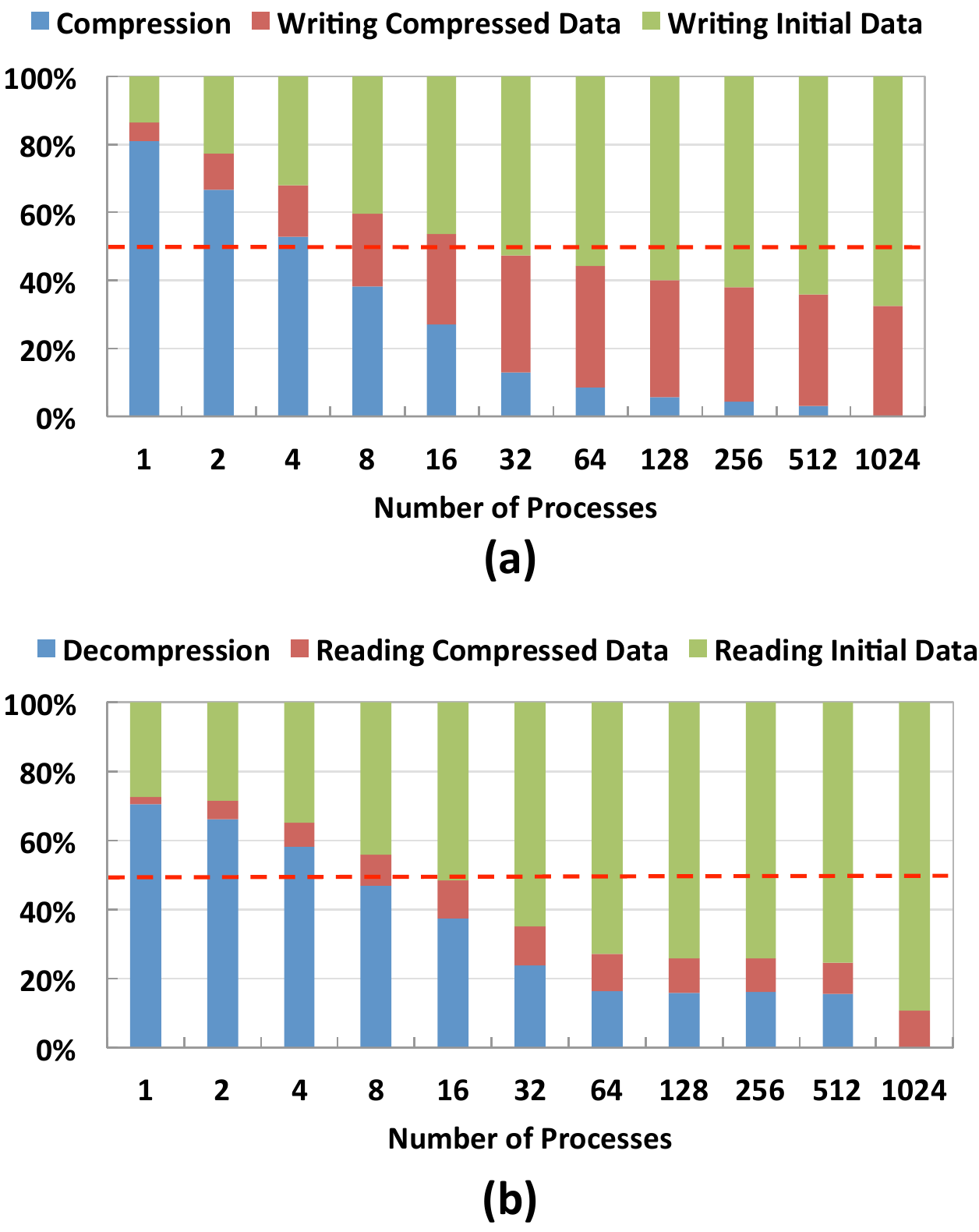}
\vspace{-4mm}
\caption{Comparison of time to compress/decompress and write/read compressed data against time to write/read initial data on Blues.}
\label{fig: f12}
\vspace{-4mm}
\end{figure}

\begin{table}[]
\centering
\caption{Strong scalability of parallel compression using SZ-1.4 with different number of processes on Blues}
\label{tab: parallelcomp}
\begin{adjustbox}{max width=0.37\textwidth}

\begin{tabular}{|c|c|c|c|c|}
\hline
\textbf{\begin{tabular}[c]{@{}c@{}}Number of\\ Processes\end{tabular}} & \textbf{\begin{tabular}[c]{@{}c@{}}Number of\\ Nodes\end{tabular}} & \textbf{\begin{tabular}[c]{@{}c@{}}Comp Speed\\ (GB/s)\end{tabular}} & \textbf{Speedup} & \textbf{\begin{tabular}[c]{@{}c@{}}Parallel\\ Efficiency\end{tabular}} \\ \hline
1                                                                      & 1                                                                  & 0.09                                                                 & 1.00             & 100.0\%                                                                \\ \hline
2                                                                      & 2                                                                  & 0.18                                                                 & 2.00             & 99.8\%                                                                 \\ \hline
4                                                                      & 4                                                                  & 0.35                                                                 & 3.99             & 99.9\%                                                                 \\ \hline
8                                                                      & 8                                                                  & 0.70                                                                 & 7.99             & 99.8\%                                                                 \\ \hline
16                                                                     & 16                                                                 & 1.40                                                                 & 15.98            & 99.9\%                                                                 \\ \hline
32                                                                     & 32                                                                 & 2.79                                                                 & 31.91            & 99.7\%                                                                 \\ \hline
64                                                                     & 64                                                                 & 5.60                                                                 & 63.97            & 99.9\%                                                                 \\ \hline
128                                                                    & 64                                                                 & 11.2                                                                 & 127.6            & 99.7\%                                                                 \\ \hline
256                                                                    & 64                                                                 & 21.5                                                                 & 245.8            & 96.0\%                                                                 \\ \hline
512                                                                    & 64                                                                 & 40.5                                                                 & 463.0            & 90.4\%                                                                 \\ \hline
1024                                                                   & 64                                                                 & 81.3                                                                 & 930.7            & 90.9\%                                                                 \\ \hline
\end{tabular}
\end{adjustbox}
\vspace{-2mm}
\end{table}

\begin{table}[]
\centering
\caption{Strong scalability of parallel decompression using SZ-1.4 with different number of processes on Blues}
\label{tab: paralleldecomp}
\begin{adjustbox}{max width=0.37\textwidth}

\begin{tabular}{|c|c|c|c|c|}
\hline
\textbf{\begin{tabular}[c]{@{}c@{}}Number of\\ Processes\end{tabular}} & \textbf{\begin{tabular}[c]{@{}c@{}}Number of\\ Nodes\end{tabular}} & \textbf{\begin{tabular}[c]{@{}c@{}}Decomp Speed\\ (GB/s)\end{tabular}} & \textbf{Speedup} & \textbf{\begin{tabular}[c]{@{}c@{}}Parallel\\ Efficiency\end{tabular}} \\ \hline
1                                                                      & 1                                                                  & 0.20                                                                   & 1.00             & 100.0\%                                                                \\ \hline
2                                                                      & 2                                                                  & 0.40                                                                   & 1.99             & 99.6\%                                                                 \\ \hline
4                                                                      & 4                                                                  & 0.80                                                                   & 4.00             & 99.9\%                                                                 \\ \hline
8                                                                      & 8                                                                  & 1.60                                                                   & 7.94             & 99.2\%                                                                 \\ \hline
16                                                                     & 16                                                                 & 3.20                                                                   & 16.00            & 99.9\%                                                                 \\ \hline
32                                                                     & 32                                                                 & 6.40                                                                   & 31.91            & 99.7\%                                                                 \\ \hline
64                                                                     & 64                                                                 & 12.8                                                                   & 64.00            & 99.9\%                                                                 \\ \hline
128                                                                    & 64                                                                 & 25.6                                                                   & 127.7            & 99.7\%                                                                 \\ \hline
256                                                                    & 64                                                                 & 49.0                                                                   & 244.5            & 95.5\%                                                                 \\ \hline
512                                                                    & 64                                                                 & 92.5                                                                   & 461.4            & 90.1\%                                                                 \\ \hline
1024                                                                   & 64                                                                 & 187.0                                                                  & 932.7            & 91.1\%                                                                 \\ \hline
\end{tabular}
\end{adjustbox}
\vspace{-4mm}
\end{table}

We present the strong scalability of the parallel compression and decompression without the I/O (i.e., writing/reading data) time in Table \ref{tab: parallelcomp} and \ref{tab: paralleldecomp} with different scales ranging from 1 to 1024 processes on the Blues cluster. In the experiments, we set $eb_{rel} = 10^{-4}$ for all the compression. The number of processes is increased in two stages. At the first stage, we launch one process per node and increase the number of nodes until the maximum number we can request (i.e., 64). At the second stage, we run the parallel compression on 64 nodes while changing the number of processes per node. We measure the time of compression/decompression without the I/O time and use the maximum time among all the processes. We test each experiment five times and use the average compression/decompression time to calculate their speeds, speedup, and parallel efficiency as shown in the tables. The two tables illustrates that the parallel efficiency of our compressor can stay nearly 100\% from 1 to 128 processes, which demonstrates that our compression/decompression have linear speedup with the number of processors. However, the parallel efficiency is decreased to about 90\% when the total number of processes is greater than 128 (i.e, more than two processes per node). This performance degradation is due to node internal limitations.
%This performance degradation is probably due to cache miss. Specifically, based on the prediction model we discussed in Section \ref{sec: prediction}, each data point needs to be reused multiple times for calculating the predicted values of its neighbored data points. Note that when the scale is greater than 128 processes in our experiment, some processors may hold at least two process, because there are at most 64 nodes (each facilitated with 2 processors) in our experiment. In this situation, the data of one process in the cache may be flushed out by the data of another process easily, leading to a cache miss when the data flushed out is needed again. 
Note that the compression/decompression speeds of a singe process in Table \ref{tab: parallelcomp} and \ref{tab: paralleldecomp} are different from ones in Table \ref{tab: speedfactor}, since we run the sequential and parallel compression on two different platforms.

Figure \ref{fig: f12} compares the time to compress/decompress and write/read the compressed data against the time to write/read the initial data. Each bar represents the sum of compression/decompression time, writing/reading the compressed data and writing/reading the initial data. We normalize the sum to 100\% and plot a dash line at 50\% to ease the comparison. It illustrates that the time of writing and reading initial data will be much longer than the time of writing and reading compressed data plus the time of compression and decompression on the Blues when the number of processors is 32 or more. This demonstrates our compressor can effectively reduce the total I/O time when dealing with the ATM data sets. We also note that the relative time spent in I/O will increase with the number of processors, because of inevitable bottleneck of the bandwidth when writing/reading data simultaneously by many processes. By contract, our compression/decompression have linear speedup with the number of processors, which means the performance gains should be greater with increasing scale.

\section{Related Work}
\label{sec: relate}

Scientific data compression algorithms fall into two categories: losseless compression \cite{fpzip,gzip,lz77} and lossy compression \cite{di,sasaki,lindstrom,laksh}.

Popular lossless compression algorithms include GZIP \cite{gzip}, LZ77 \cite{lz77}, and FPZIP \cite{fpzip}. However, the mainly limitation of the lossless compressors is their fairly low compression factor (up to 2:1 in general \cite{ratana}). In order to improve the compression factor, several lossy data compression algorithms were proposed in recent years. ISABELA \cite{laksh} performs data compression by B-spline interpolation after sorting the data series. But ISABELA has to use extra storage to record the original index for each data point because of the loss of the location information in the data series; thus, it suffers from a low compression factor especially for large numbers	 of data points. Lossy compressors using vector quantization, such as NUMARCK \cite{chen} and SSEM \cite{sasaki}, cannot guarantee the compression error within the bound and have a limitation of the compression factor, as demonstrated in \cite{di}. The difference between NUMARCK and SSEM is that NUMARCK uses vector quantization on the differences between adjacent two iterations for each data, whereas SSEM uses vector quantization on the high frequency data after wavelet transform. ZFP is a lossy compressor using exponent/fixed-point alignment, orthogonal block transform, bit-plane encoding. However, it might not respect the error bound when the data value range is huge.
% and it has a stronger correlation on the variables with relatively low compression factors in the ATM data sets.
\section{Conclusion and Future Work}
\label{sec:conclude}

In this paper, we propose a novel error-controlled lossy compression algorithm. We evaluate our compression algorithm by using multiple real-world production scientific data sets across multiple domains, and we compare it with five state-of-the-art compressors based on a series of metrics. We have implemented and released our compressor under a BSD license. The key contributions are listed below.
\begin{itemize}
\item We derive a generic model for the multidimensional prediction and optimize the number of data points used in the prediction to achieve significant improvement in the prediction hitting rate.
\item We design an adaptive error-controlled quantization and variable-length encoding model (AEQVE) to deal effectively with the irregular data with spiky changes.
\item Our average compression factor is more than 2x compared with the second-best compressor with reasonable error bounds and our average compression error has more than 3.8x reduction over the second-best with user-desired bit-rates on the ATM, APS and hurricane data sets.
\end{itemize}

We encourage users to evaluate our lossy compressor and compare with existing state-of-the-art compressors on more scientific data sets. In the future work, we plan to optimize our compression for different architectures and data sets. We will also further improve the autocorrelation of our compression on the data sets with relatively high compression factors.
%Although our compressor (including compression and decompression) is more than 2x faster than SZ and 30x faster than ISABELA and our compressor's decompression is 41\% faster than ZFP's on the average, our compressor's compression is still about 50\% slower than ZFP's. We therefore plan to optimize our compression for different architectures and high dimensional data sets in the future. Also we will further improve the autocorrelation of our compression on the data sets with relatively high compression factors in the future.
\section*{Acknowledgments}
\scriptsize
This research was supported by the Exascale Computing Project (ECP), Project Number: 17-SC-20-SC, a collaborative effort of two DOE organizations - the Office of Science and the National Nuclear Security Administration, responsible for the planning and preparation of a capable exascale ecosystem, including software, applications, hardware, advanced system engineering and early testbed platforms, to support the nation's exascale computing imperative. The submitted manuscript has been created by UChicago Argonne, LLC, Operator of Argonne National Laboratory (Argonne). Argonne, a U.S. Department of Energy Office of Science laboratory, is operated under Contract No. DE-AC02-06CH11357.

% trigger a \newpage just before the given reference
% number - used to balance the columns on the last page
% adjust value as needed - may need to be readjusted if
% the document is modified later
%\IEEEtriggeratref{8}
% The "triggered" command can be changed if desired:
%\IEEEtriggercmd{\enlargethispage{-5in}}

% references section

% can use a bibliography generated by BibTeX as a .bbl file
% BibTeX documentation can be easily obtained at:
% http://www.ctan.org/tex-archive/biblio/bibtex/contrib/doc/
% The IEEEtran BibTeX style support page is at:
% http://www.michaelshell.org/tex/ieeetran/bibtex/
%\bibliographystyle{IEEEtran}
% argument is your BibTeX string definitions and bibliography database(s)
%\bibliography{IEEEabrv,../bib/paper}
%
% <OR> manually copy in the resultant .bbl file
% set second argument of \begin to the number of references
% (used to reserve space for the reference number labels box)

%\bibliographystyle{IEEEtran}
\bibliographystyle{abbrv}
\bibliography{bib/refs}

% that's all folks
\end{document}